\def\ie{i.e.,\xspace}
\def\etal{et al.\xspace}
\def\etc{etc.}
\def\eg{e.g.,\xspace}
\newcommand{\oursystem}{Tash\xspace}
\newtheorem{thm}{Theorem~}
\newtheorem{lem}{Lemma~}
\newtheorem{defn}{Definition~}
\newtheorem{problem}{Problem~}
\renewcommand\footnotetextcopyrightpermission[1]{} % removes footnote with conference information in first column
\renewcommand\@formatdoi[1]{\ignorespaces}
\definecolor{dark-red}{HTML}{990033}
\patchcmd{\maketitle}{\@copyrightspace}{}{}{}
\begin{document}
\title{Analog On-Tag Hashing: Towards Selective Reading as Hash Primitives in Gen2 RFID Systems}
\subtitle{(Technical Report)}
\author{Lei Yang$^{\ast}$,  Qiongzheng Lin$^{\ast}$, Chunhui Duan$^{\ast}$$^{\dagger}$,  Zhenlin An$^{\ast}$ } 
\affiliation{%
      \institution{$^\ast$Department of Computing, The Hong Kong Polytechnic University}
	\institution{$^\dagger$ School of Software, Tsinghua University}
 }
\email{{young, lin, hui, an}@tagsys.org}

\begin{abstract}

Deployment of  billions of Commercial off-the-shelf (COTS) RFID tags has drawn much of the attention of the research community because of the performance gaps of current systems. In particular, hash-enabled protocol (HEP)  is one of the most thoroughly studied topics in the past decade.  HEPs are designed for a wide spectrum of notable applications (\eg missing detection) without need to collect all tags. HEPs assume that each tag contains a \emph{hash function}, such that a tag can select a \emph{random} but \emph{predicable} time slot to reply with a \emph{one-bit} presence signal that shows its existence. However, the hash function has never been implemented in COTS tags in reality, which makes HEPs a 10-year untouchable mirage.  This work designs and implements a group of  analog on-tag hash primitives (called \emph{Tash}) for COTS Gen2-compatible RFID systems,  which moves prior HEPs forward from theory to practice.  In particular, we design three types of hash primitives, namely, \emph{tash function}, \emph{tash table function} and \emph{tash operator}. All of these hash primitives are implemented through \emph{selective reading}, which is a fundamental and mandatory functionality specified in Gen2 protocol, without any hardware modification and fabrication. We further apply our hash primitives in two typical HEP applications (\ie cardinality estimation and missing detection) to show the feasibility and effectiveness of Tash. Results from our prototype, which is  composed of one ImpinJ reader and $3,000$ Alien tags, demonstrate that the new design lowers $60\%$ of the communication overhead in the air. The tash operator can additionally introduce an overhead drop of $29.7\%$.

\end{abstract}

\begin{CCSXML}
<ccs2012>
<concept>
<concept_id>10003033.10003106.10003112</concept_id>
<concept_desc>Networks~Cyber-physical networks</concept_desc>
<concept_significance>500</concept_significance>
</concept>
<concept>
<concept_id>10010520.10010553</concept_id>
<concept_desc>Computer systems organization~Embedded and cyber-physical systems</concept_desc>
<concept_significance>500</concept_significance>
</concept>
</ccs2012>
\end{CCSXML}

\ccsdesc[500]{Networks~Cyber-physical networks}
\ccsdesc[500]{Computer systems organization~Embedded and cyber-physical systems}

\keywords{RFID; Hash Function; Hash Table Function; EPCGlobal Gen2}

\maketitle

\section{Introduction}
\label{section:introduction}

RFID systems are increasingly used in everyday scenarios, which range from object tracking, indoor localization \cite{yang2014tagoram}, vibration sensing \cite{yang2016making}, to medical-patient management, because of the extremely low cost of commercial RFID tags (\eg as low as 5 cents per tag). 
%In fact, their costs are considerably lower than those of products to which they are attached.
%For example, a typical application involves tagging all items in a store or warehouse, such that purchases or records can be performed automatically when pushing a shopping cart or cargo through the checkout line\cite{barat2013rfid}.
%%%Reserved for trans
Recent reports show that many industries like healthcare and retailing are moving towards deploying RFID systems for object tracking, asset monitoring, and emerging Internet of Things \cite{industry-report}.

\subsection{The State-of-the-Art}
The Electronic Product Code global is an organization established  to  accomplish the worldwide adoption and standardization of EPC technology. It published the Gen2 air protocol \cite{gen2} for RFID system in 2004. A Gen2 RFID system consists of a reader and many passive tags. The passive tags without batteries are  powered up purely by harvesting radio signals from readers.  This protocol has become the mainstream specification globally, and has been adopted as a major part of the ISO/IEC 18000-6 standard.

Embedding Gen2 tags into everyday objects to construct ubiquitous networks has been a long-standing vision. However, a major problem that challenges this vision is that the Gen2 RFID system is not efficient \cite{wang2012efficient}. First, the RFID system utilizes  simple modulations (e.g., ON-OFF keying or BPSK) due to the lack of traditional transceiver \cite{dobkin2012rf}, which prevents tags from leveraging a suitable channel to transmit more bits per symbol and increase the bandwidth efficiency.  Second, tags cannot hear the transmissions of other tags. They merely reply on the reader to schedule their medium access with the Framed Slotted ALOHA protocol, which results in many  empty and collided slots. This condition also retards the inventory process. These two limitations force a reader to go through a  long inventory phase when it collects all the tags in the scene.

\subsection{Ten-Year Mirage of HEP}

Motivated by the aforementioned performance gaps, the research community opened a new focus on HEP design approximately 10 years ago.  The key idea that underlies HEPs is that each tag selects a time slot according to the hash value of its \texttt{EPC} and a random seed. It then replies a one-bit presence signal rather than the entire \texttt{EPC} number in the selected slot. HEPs treat all tags as if they were a virtual sender, which outputs a gimped hash table  (\ie a \emph{presence bitmap}) when responding to a challenge (\ie a random seed). Most importantly, HEPs assume the backend server and every tag share a \emph{hash function}, and the resulting bitmap is random but predicable when the \texttt{EPC}s and seeds are known.

Fig.~\ref{fig:hep} shows a toy example with $n=8$ tags, each of which contains a unique \texttt{EPC} number presented in binary format (\eg $101010_2$), to illustrate the HEP concept. The reader divides the time into $d$ time slots (\eg $d=8$.) and challenges these tags with the random seed $r$. Each tag selects the $(h_d(\text{\texttt{EPC}},r))^{th}$ time slot to reply the one-bit signal, where $h(\cdot)$ is a common hash function (\eg MD5, SHA-1) and $h_d(\cdot) = h(\cdot)\bmod d$. The reader can recognize two possible results for each time slot, namely, \emph{empty} and \emph{non-empty}\footnote{Some work assume the reader can recognize the signal collision, obtaining three results: empty, single and collision. }. The reader abstracts the reply results into a bitmap (\ie $B=[0,1,1,0,1,1,1,0]$), where each element  contains two possible values, that is, $0$ and $1$, that corresponds to empty and non-empty slots, respectively.  The upper layer then utilizes this returned bitmap to explore many notable  applications. We show the following two typical applications as examples to drive the key point:

$\bullet$ \emph{Cardinality estimation.} Estimating the size of a given tag population is required in many applications, such as privacy sensitive systems and warehouse monitoring. Kodialam \etal\cite{kodialam2006fast} presented a pioneer estimator. Given that tags select the time slots uniformly because of hashing, the expected number of `$0$'s equals $n_0=d (1-1/d)^n \approx de^{-n/d}$. Counting $n_0$ in an instance yields a ``zero estimator'', \ie $\widehat{n}\approx -d\ln(n_0/d)$. For example, $\widehat{n}=-8\times\ln(3/8)=7.8$ in our toy example.

$\bullet$ \emph{Missing detection.} Consider a major warehouse that stores thousands of apparel, shoes, pallets, and cases. How can a staff \emph{immediately} determine if anything is missing?  Sheng and Li \cite{tan2008monitor} conducted the early study on the fast detection of missing-tag events by using the presence bitmap. They assumed all \texttt{EPC}s were known in a closed system. Given that hash results are predicable, the system can generate an \emph{intact} bitmap at the backend. We can identify the missing tags in a probabilistic approach by comparing the intact and instanced bitmaps.  For example, if the second entry equals $0$ (which is supposed to be $1$), the the tag $101010_2$ must be missing in our toy example.

\begin{figure}[t!]
  \centering
  \includegraphics[width=0.8\linewidth]{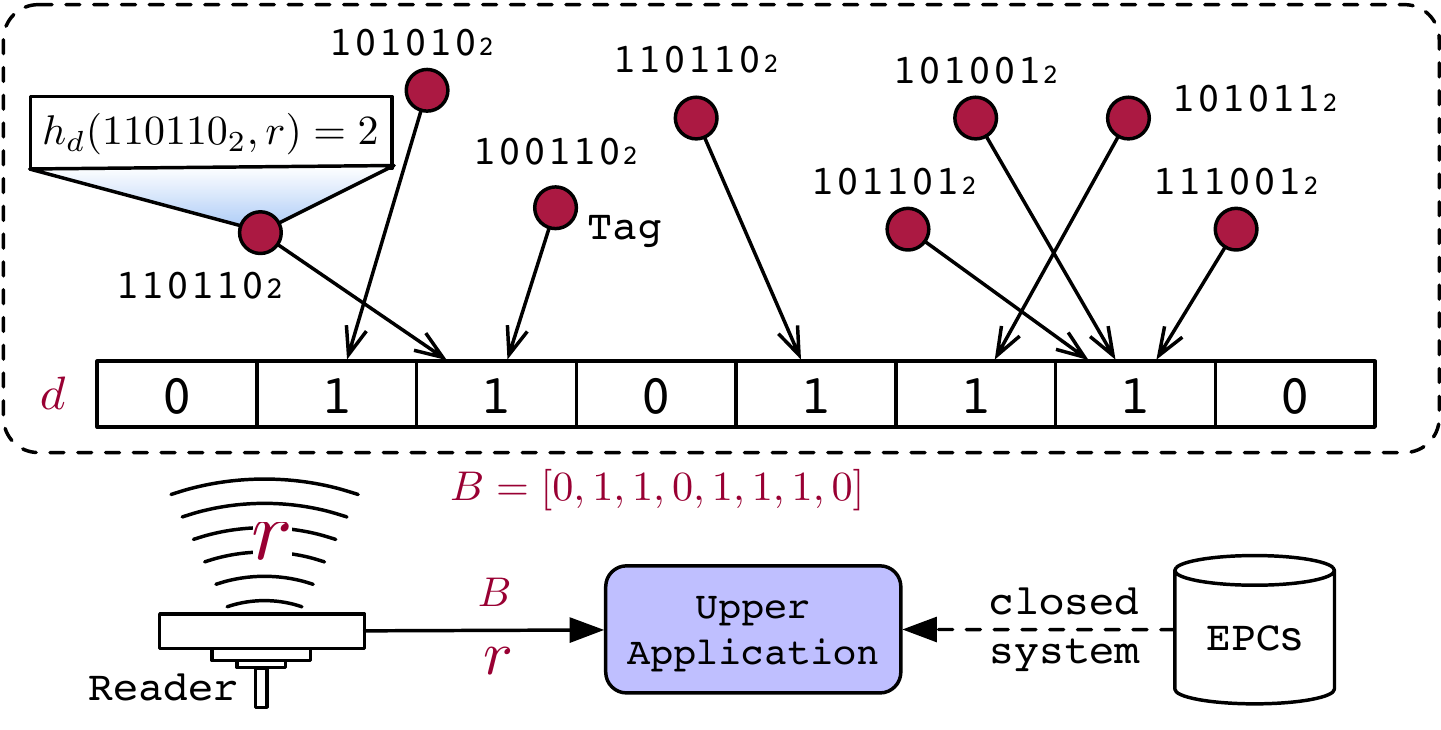}
   \caption{Hash enabled protocol illustration. \textnormal{In the figure, $8$ tags emit one-bit signals in the $h_d(\texttt{EPC},r)^{th}$ time slots respectively, which are challenged by the random seed $r$ and the frame length $d$. Finally, the reader abstracts tags' responses as a presence bitmap.}}
  \label{fig:hep}
    \vspace{-0.2cm}
\end{figure}

HEPs are advantageous in terms of speed and privacy. HEPs are faster than all prior per-tag reading schemes for two reasons. First, collecting all the EPCs of the tags is time consuming  because of the aforementioned low-rate modulation, whereas one-bit presence signals of HEPs save approximately  $96\times$ of the time (\ie the \texttt{EPC} length equals 96 bits in theory \footnote{Actual case in practice would be less than this estimate due to other extra jobs, such as setup time, query time, \etc}). Second, collisions  are considered as one of the major reasons that drag down the reading. On the contrary, HEPs tolerate and consider collisions as informative.  When privacy issues are considered, the tag's identification may be unacceptable in certain instances.  HEPs allow tags to send out non-identifiable information (\ie one-bit signals).

HEPs are very promising. However, after 10 years of enthusiastic discussion about the opportunities that HEPs provide, the reality is beginning to settle: the functionality of hashing (\ie hash function and hash table function) has never been implemented  in any Gen2 RFID tags and considered by any RFID standard. No hint shows that this function will be widely accepted in the near future.

\subsection{Why Not Support Hashing?}

A large number of recent work have attempted to supplement hash functionality to RFID tags, which can be categorized into three groups. First group, like \cite{feldhofer2006case,poschmann2007new}, modifies the common hash functions to accommodate resource-constrained RFID tags. The second group \cite{yoshida2007mame,bogdanov2007present,rolfes2008ultra,bogdanov2007present,poschmann2007new,lim2005mcrypton,yusecurity,hong2006hight,good2007hardware} designs new lightweight and efficient hash functions dedicatedly for RFID tags. The third group seeks new design of RFID tags like WISP\cite{philipose2005battery} and Moo \cite{zhang2011moo}, which gives tags more powerful computing capabilities (\eg hashing \cite{pendl2012elliptic}). Unfortunately, as far as we know, none of these work has been really applied in COTS RFID systems yet.

Why is the hash function unfavored? A term called as \emph{Gate Equivalent} (GE) is widely used to evaluate a hardware design with respect to its efficiency and availability. One GE is esquivalient to the area which is required by the two-input NAND gate with the lowest deriving strength of the corresponding technology.  A glance at Table.~\ref{tab:hash-comparision} shows the available designs of hash functions for RFID tags require a significant number of GEs, which are completely unaffordable by current COTS tags. For example, the most compact hash functions requires thousands of GEs  (\eg $1,075$ GEs for PRESENT-80), which  incur extremely high energy consumption and manufacture cost. Thus, relatively few RFID-oriented protocols that appeal to a hash function can be utilized. RFID was expected to be one of the most competitive automatic identification technologies  due to its many attractive advantages (\eg simultaneous reading, NLOS, \etc) compared with others (\eg barcode). However, this progress has been hindered for many years by the final obstacle that the industry is attempting to overcome (\ie the price). The industry is extremely sensitive to the cost being doubled or tripled  by the hash, although HEPs actually introduce significant  outperformance.

\subsection{Our  Contributions}

This work designs a group of  hash primitives, \emph{Tash},  which takes advantage of existing fundamental function of \emph{selective reading} specified in Gen2 protocol, \emph{without} any hardware modification and fabrication. Our design and implementation both strictly follow the Gen2 specification, so it can work in any Gen2-Compatible RFID system. These mimic (or analog) hash primitives act as we embedded real hash  circuits on tags\footnote{This work does not target at designing any analog circuit on readers or tags, but offers a mimic hash function acting as we  embed a hash circuit on each tag.}, while we actually implement them in application layer. Specifically, we design the following three kinds of hash primitives to revive prior HEPs:

$\bullet$ We design a hash function (aka tash function)  over existing COTS Gen2 tags. The hash function outputs a hash value associated with the \texttt{EPC} of the tag and a random seed, as HEPs require.

$\bullet$ We design a hash table function (aka tash table function) over all tags in the scene. It can produce a  hash table (aka tash table), which is more informative than a bitmap, over the all tags in the scene. In particular, each entry indicates the exact number of tags hashed into this entry.

$\bullet$ Major prior HEPs require multiple acquisitions of bitmaps to meet an acceptable confidence. We design three tash operators (\ie tash AND, OR and XOR) to perform entry-wise set operations over multiple tash tables on tag in the physical layer, which offers a one-stop acquisition solution.

\textbf{Summary.} It  has been considered that HEPs are hardly applied in practice because of the `impossible mission'  of implementing hash function on COTS Gen2 tags \cite{bogdanov2008hash}. In this work, our main contribution lies in the practicality and usability, that is, enabling billions of deployed tags to benefit performance boost from prior well-studied HEPs, with our hash primitives.  To the best of our knowledge,  this is the first work to implement the hash functionality over COTS Gen2 tags.  Second, we provide an implementation of \oursystem and show its feasibility  and efficiency in two typical usage scenarios.  Third, we investigate several leading RFID products in  market  including $18$ types of tags and $10$ types of readers, in terms of their compatibility with Gen2, and conduct an extensive evaluation on our prototype with COTS devices.

%\noindent\textbf{Summary of Results.} We build a software framework using LLRP protocol to implement all hashing primitives aforementioned. We also demonstrate the effectiveness of the framework over the Commercial Off-The-Shelf (COTS) ImpinJ R420 reader and Alien tags.  Our experiments lead to the following finds:

%====reserved for extension====
%\subsection{Organization}
%The reminder of this paper is organized as follows. $\S$\ref{section:motivation} list several key usage scenarios of HEPs motivating our design. $\S$\ref{section:overview} overviews the system and formally defines the hash primitives. We present the tash's design, usage study and implementation in $\S$\ref{section:motivation}, $\S$\ref{section:usage-study} and $\S$\ref{section:implementation} respectively. $\S$\ref{section:evaluation} show the benchmark and case study. Finally, we introduce the related work and give conclusions in $\S$\ref{section:related-work} and $\S$\ref{section:conclusion}, respectively.

\section{Related Work}
\label{section:motivation}

We review the related work from two aspects: the designs of hash functions and hash enabled protocols.

\textbf{Design of hash function.} Feldhofer and Rechberger \cite{feldhofer2006case} firstly point that current common hash functions (\eg \texttt{MD5}, \texttt{SHA-1}, \etc), are not hardware friendly and unsuitable at all for RFID tags, which have very constrained computing ability.   Such difficulty has spurred considerable research \cite{feldhofer2006case,yoshida2007mame,bogdanov2007present,rolfes2008ultra,bogdanov2007present,poschmann2007new,lim2005mcrypton,yusecurity,hong2006hight,poschmann2007new,good2007hardware}. We sketch the primary designs and their features in Table.~\ref{tab:hash-comparision}. For example, Bogdanov \etal \cite{bogdanov2007present} propose a hardware-optimized block cipher, PRESENT, designed with area and power constraints.  The follow-up work \cite{rolfes2008ultra} presents three different architectures of PRESENT and highlights their availability for both active and passive smart devices. Their implementations reduce the number of GEs to $1,000$ around. Another follow-up work \cite{bogdanov2008hash} extends the design of PRESENT and gives 8 variants to fulfill different requirements, \eg DM-PRESENT-80, DM-PRESENT-128, H-PRESENT-128, \etc\  The work \cite{poschmann2007new} suggests to choose DES as hash function for RFID tags due to relatively low complexity, and presents a variant of DES, called as\ie DESXL. Lim and Korkishko \cite{lim2005mcrypton} present a $64$-bit hash function with three key size options ($64$ bits, $96$ bits and $128$ bits),  which requires  about $3,500$ and $4,100$ GEs.  In summary, despite these optimized designs, majority are still presented in theory and  none of them are available for the COTS RFID tags. On contrary, our work explores hash function from another different aspect, that is, leveraging selective reading to mimic equivalent hash primitives.

\begin{table}[!t]
  \centering
  \caption{Performance overview of current hash functions  \protect \footnotemark}
  \label{tab:hash-comparision}
  \vspace{-0.3cm}
  \footnotesize
  \begin{tabular}{|l|c|c|c|c|}
    \hline
    \textbf{Hash functions} & \textbf{Key size}&\textbf{GE} & \textbf{Power} & \textbf{Clock cycles} \\
    \hline
     SHA-256\cite{feldhofer2006case} &256& 10,868 &15.87$\mu A$  & 1,128 \\
     \hline
     SHA-1 \cite{feldhofer2006case}& 160&8,120 & 10.68$\mu A$  & 1,274\\
     \hline
     AES \cite{feldhofer2004strong} &128&  3,400 &8.15$\mu A$  & 1,032 \\
     \hline
     MAME\cite{yoshida2007mame} & 256&8,100 &5.16$\mu A$  & 96 \\
     \hline
     MD5 \cite{feldhofer2006case}& 128& 8,400 & -  & 612 \\
     \hline
     MD4 \cite{feldhofer2006case} & 128& 7,350 & -  & 456 \\
     \hline
     PRESENT-80 \cite{bogdanov2007present} & 80& 1,570 & -  & 32 \\
     \hline
     PRESENT-80 \cite{rolfes2008ultra} & 80&1,075 & -  & 563 \\
     \hline
     PRESENT-128 \cite{bogdanov2008hash} & 128&1,886 & -  & 32 \\     
     \hline
     DES \cite{poschmann2007new} & 56 &2,309 & -  & 144 \\
     \hline
     mCrypton \cite{lim2005mcrypton} & 96&2,608 & -  & 13 \\
     \hline
     TEA \cite{yusecurity} & 128 &2,355 & -  & 64 \\
     \hline
     HIGHT \cite{hong2006hight} & 128 &3,048 & -  & 34 \\
     \hline
     DESXL \cite{poschmann2007new} & 184 &2,168 & -  & 144 \\
      \hline
     Grain \& Trivium\cite{good2007hardware} & 80 & 2,599 & -  & 1 \\
    \hline
  \end{tabular}
\end{table}

\footnotetext{ `-' means the algorithm is presented in theory and does not have  specific power consumption.}

\textbf{Design of hash enabled protocol.} To drive our key point, we conduct a brief survey of previous related works. We list several key usage scenarios that we would like to support. Our objective is not to complete the list, but to motivate our design.
(1) \emph{Cardinality estimation.} Dozens of estimators \cite{kodialam2007anonymous,sheng2008finding,sze2009fast,qian2010asap,qian2011cardinality,shah2011cardinality,shahzad2012every,zheng2013zoe,chen2013understanding,xiao2013differential,gong2014arbitrarily,liu2014fast,zheng2014towards,liu2015rfid,hou2015place} have been proposed in the past decade. For example, Qian \etal \cite{qian2011cardinality} proposed an estimation scheme called lottery frame. Shahzad and Liu \cite{shahzad2012every} estimated the number based on the average run-length of ones in a bit string received using the FSA. In particular, they claimed that their protocol is compatible with Gen2 systems. However, their scheme still requires modifying the communication protocol, and thus, it fails to work with COTS Gen2 systems. By contrast, our prototype can operate in COTS Gen2 systems as demonstrated in this study. (2) \emph{Missing detection.} The missing detection problem  was firstly mentioned in \cite{tan2008monitor}. Thereafter,  many follow-up works \cite{li2010identifying,luo2011efficient,zhang2011fast,luo2012probabilistic,zheng2015p,tan2010efficient,li2013efficient,liu2014multiple,luo2014missing,ma2012efficient,xie2014rfid,shahzad2015expecting,shahzad2016fast,yu2016missing,yu2015finding} have started to study the issue of false positives resulting from the collided slots  by using multiple bitmaps. Additional  details regarding  this application are introduced in $\S$\ref{section:usage}.
(3) \emph{Continuous reading.} The traditional inventory approach starts from the beginning each time it interrogates all the tags, thereby making it highly time-inefficient. These works \cite{sheng2010efficient,xie2013continuous,liu2014towards} have proposed continuous reading protocols that can incrementally collect tags in each step using the bitmap. For example, Sheng \etal \cite{sheng2010efficient} aimed to preserve the tags collected in the previous round and collect only unknown tags. Xie \etal \cite{xie2013continuous} conducted an experimental study on mobile reader scanning. Liu \etal \cite{liu2014towards} initially estimated the number of overlapping tags in two adjacent inventories and then performed an effective incremental inventory.
(4) \emph{Data mining.} These works \cite{sheng2008finding,xie2014efficiently,luo2013efficient,luo2016efficient,liu2015fast} discuss how to discover potential information online through bitmaps. For example,  Sheng \etal  \cite{sheng2008finding} proposed to identify the popular RFID categories using the group testing technique.  Xie \etal found histograms over tags through a small number of bitmaps\cite{xie2014efficiently}. Luo \etal \cite{luo2013efficient,luo2016efficient} determined whether the number of objects in each group was above or below a threshold. Liu \etal \cite{liu2015fast} proposed a new online classification protocol for a large number of groups.
(5) \emph{Tag searching.} These works\cite{liu2015step,zheng2013fast} have studied the tag searching problem that aims to find wanted tags from a large number of tags using bitmaps in a  multiple-reader environment. Zheng \etal  \cite{zheng2013fast} utilized bitmaps to aggregate a large volume of RFID tag information and to search the tags quickly.  Liu \etal \cite{liu2015step}  first used the testing slot technique to obtain the local search result by iteratively eliminating wanted tags that were absent from the interrogation region.
(6) \emph{Tag polling.} \cite{qiao2011energy,qiao2013tag,li2016locp} consider how to quickly obtain the sensing information from sensor-augmented tags. The system requires to assign a time slot to each tag using the presence bitmap. In summary, all the aforementioned HEP designs have allowed RFID research to develop considerably in the past decade.  All the work can be boosted by our hash primitives.

\section{Overview}
\label{section:overview}
%\begin{figure}[t!]
%  \centering
%  \includegraphics[width=0.9\linewidth]{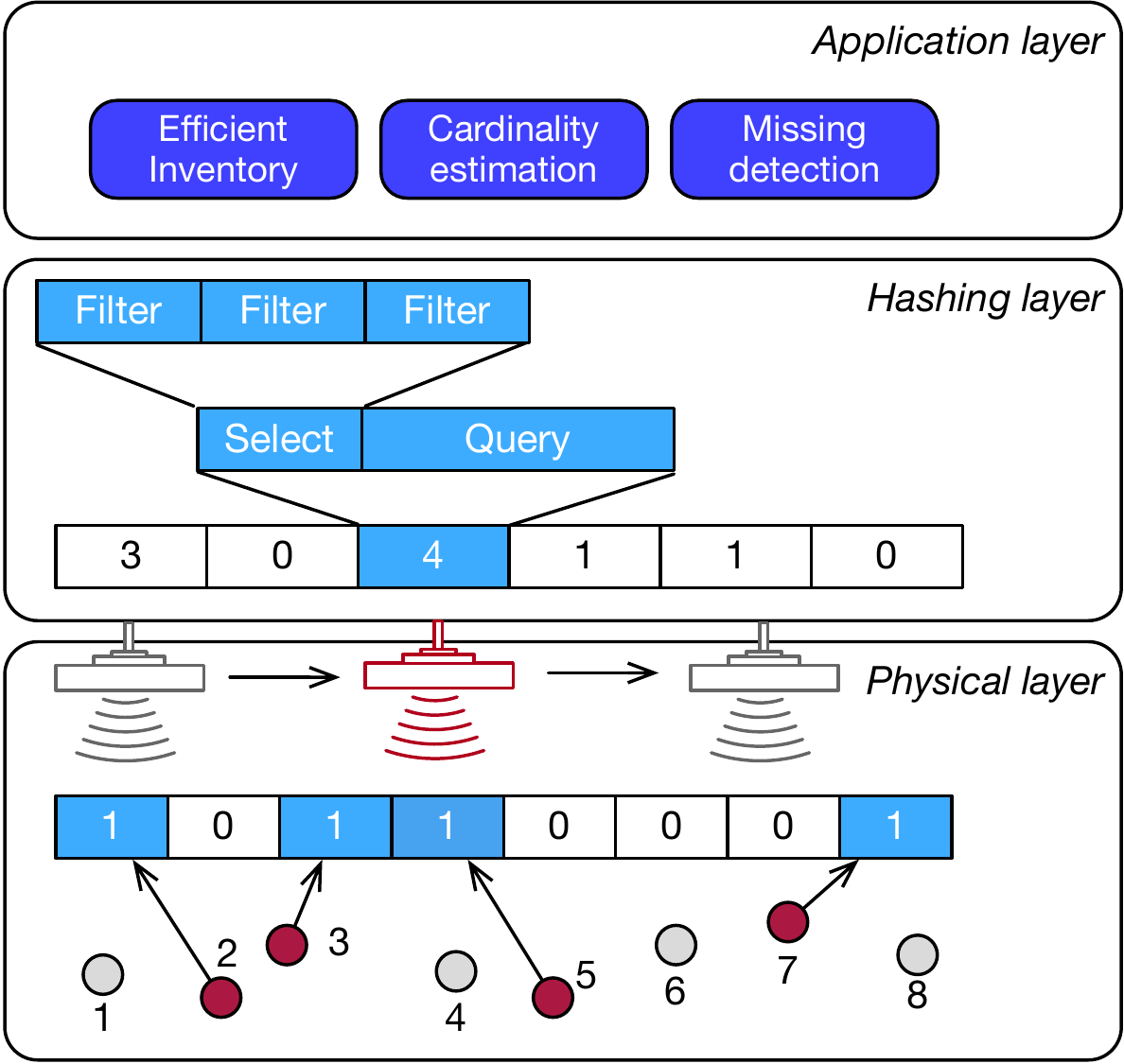}
%   \caption{Gen2 communication protocol. \textnormal{Gen2 RFID systems adopts ALOHA-based air protocol. The reader divides the time into different slots. Each tag randomly selects a time slot to reply its ID. As a result, each time slot has three cases: empty, single or collision. }}
%  \label{fig:timing}
%\end{figure}

Tash is a software framework that provides a group of fundamental hash primitives for HEPs.  This section presents its usage scope and formally defines our problem domain.

\subsection{Scope}

Despite clear and certain specifications, the implementation of the Gen2 protocol still varies with readers and manufacturers because of firmware  bugs or compromises,  especially in early released reader devices, according to our compatibility report presented in  $\S$\ref{section:microbenchmark}.  Here, we firmly claim
that our design and implementation strictly follow the specifications of the Gen2 and LLRP protocols  (refer to $\S$\ref{section:implementation}). The framework  works with any Gen2-compatible readers and tags. The  performance losses caused by  defects in devices are outside the scope of our discussion.

\subsection{Definitions of Hash Primitives}

%\begin{figure}[t!]
%  \centering
%  \includegraphics[width=0.9\linewidth]{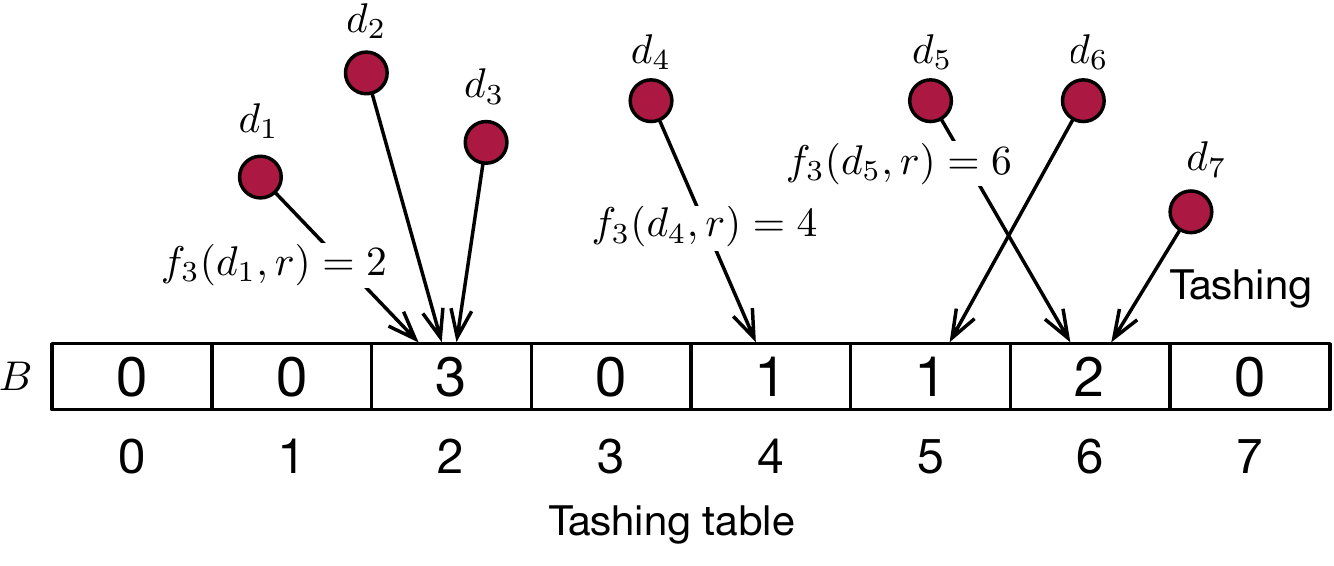}
%   \caption{Illustration of tashing and tashing table functions. \textnormal{There are total 7 tags in the scene. Their IDs are $d_1,d_2,\dots$ and $d_7$. After tashing, these tags are tashed into a 3-bit tashing table $B=[0, 0, 3, 0, 1, 1, 2, 0]$ with a random number $r$. For instance, $f_3(d_1,r)=2$ and $f_3(d_4,r)=4$.}}
%  \label{fig:tasing}
%\end{figure}

Before delving into  details, we formally define the hash primitives that the HEPs require, from a high-level.

\begin{defn}[Tash function]
\label{defn:tash-function}
An $l$-bit \emph{tash function} is actually a hash function $f_l(t,r): \mathcal{T}\times \mathcal{R}\rightarrow 2^l$, where $\mathcal{T}$ and $\mathcal{R}$ are the domains of  \texttt{EPC}s of the tags and random seeds. 	
\end{defn}

\textbf{Tash function and tash value.} As the above definition specifies, an $l$-bit tash function takes an \texttt{EPC} $t$ and a random seed $r$ as input and outputs an $l$-bit integer $i$, denoted  by:
\begin{equation}
	i=f_l(t,r)
\end{equation}
We call $l$ the \emph{dimension} of tash function (\ie $l=0,1,2,\dots$). The tash value $i$ is an integer $\in [0,2^l-1]$. Similar to other common hash function, the tash function has three basic characteristics. First, the output changes significantly when the two parameters are altered.  Second, its output is uniformly distributed within the given range, and predicable if all inputs are known. Third, the hash values are accessible.

%Like other common hashing function, tash function has three basic characteristics.  First, it takes a tag's \texttt{EPC} and a random seed  as input. The output would change a lot when the two parameters are altered.  Second, its output is randomly distributed within the given range, and predicable if knowing all inputs. Third, the computation is one-way and irreversible, \ie output reveals nothing about input.

\begin{defn}[Tash table function]
\label{defn:tash-table}
An $l$-bit tash table function can assign each tag $t$ from a given set into the $i^{th}$ entry of a hash table (aka tash table) with a random seed $r$, where $i=f_l(t,r)$. Each entry of the tash table is the number of tags tashed into it.
\end{defn}

\textbf{Tash table function and tash table.} Let $B$ and $\mathcal{F}_l$ denote a  tash table and a tash table function respectively. The tash table function takes a set of tags (\ie $T=\{t_1,t_2,$ $\dots, t_n\}$) and a random number $r$ as  input and  outputs a tash table $B$, denote by:
\begin{equation}
	B=\mathcal{F}_l(T,r)
\end{equation}
where $B[i]=|\{t|f_l(t,r)=i\}|$ (\ie the number of tags tashed into the $i^{th}$ entry) for $\forall t\in T$.  Let $L=2^l$, which is defined as the \emph{size} of the tash table. The tash table function is the core function that HEPs expect.  HEPs consider the reader as well as all tags as a black box equipping with tash table function. When inputing a random seed, the box would output a tash table. HEPs then utilize such table to provide various services (\eg missing detection or cardinality estimation.). It worths noting that superior to the bitmap employed in prior HEPs, our tash table is a perfect table that contains the exact number of tags tashed into each entry.  Clearly, the table is completely backward compatible with prior HEPs because it can be forcedly converted into a presence bitmap.

\begin{figure}[t!]
  \centering
  \includegraphics[width=0.8\linewidth]{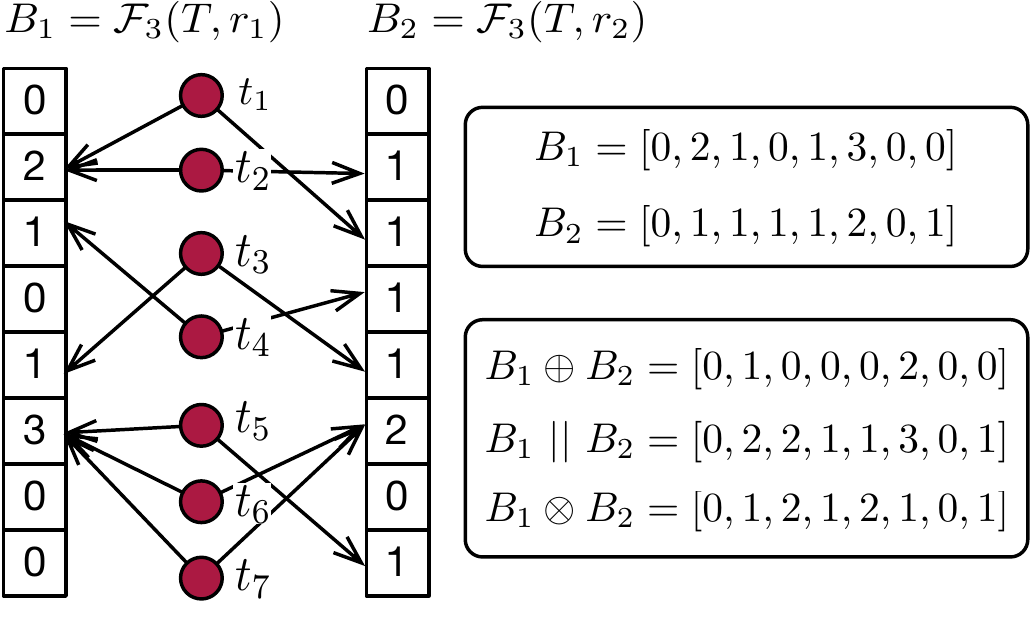}
   \caption{Illustration of tash operators. \textnormal{The left  shows two independent tash tables,  while the right  shows the results of the two tash tables with tash AND, OR and XOR.}}
  \label{fig:tash-operator}
    \vspace{-0.3cm}
\end{figure}

\textbf{Tash operators.} Most prior HEPs adopt probabilistic ways and their results are guaranteed with a given confidence level. To meet the level, they usually combine multiple bitmaps, which are acquired through multiple rounds and challenged by different seeds. We abstract such combination into three basic tash operators, namely, tash AND, OR and XOR. These operators can comprise other complex operations. Let $B_1= \mathcal{F}_l(T, r_1)$ and $B_2=\mathcal{F}_l(T, r_2)$ denote two tash tables acquired twice with two different seeds, $r_1$ and $r_2$.

\begin{defn}[Tash AND]
\label{defn:tash-and}
The tash AND  (denoted by $\oplus$) of two tash tables is to obtain the intersection of  two corresponding entry sets. Formally,  $B =B_1\oplus B_2$, where $B[i]=|\{t|f_l (t,r_1)=i\& f_l(t,r_2)=i\}|$.
\end{defn}
The tash AND is aimed at obtaining the common intersection of corresponding entries from two tash tables. For example,  as shown in Fig.~\ref{fig:tash-operator},  $B_1[1]$ and $B_2[1]$ count  $\{t_1,t_2\}$ and $\{t_2\}$ respectively. However, $(B_1\oplus B_2)[1]=|\{t_2\}|=1$, which counts $t_2$ only.

\begin{defn}[Tash OR]
\label{defn:tash-or}
The tash OR (denoted by $||$) of two tash tables is to merge two corresponding entry sets. Formally, $B = B_1|| B_2$, where $B[i]=|\{t | f_l(t,r_1)$ $=i||f_l(t,r_2)=i \}|$.
\end{defn}
The tash OR is aimed at obtaining the total number of tags mapped into the corresponding entries in two tash tables.  Note tash OR is not the same as the entry-wise sum, \ie $B_1 || B_2 \neq B_1+B_2$ because the tags twice mapped into a same entry are counted only once.  As shown in Fig.~\ref{fig:tash-operator}, $(B_1||B_2)[5]=|\{t_5,t_6,t_7\}|=3$ although $B_1[5]+B_2[5]=5$ because  $t_6$ and $t_7$ appear twice in the two tash tables.

\begin{defn}[Tash XOR]
\label{defn:tash-xor}
The tash XOR (denoted by $\otimes$) is to remove the intersection of two corresponding entry sets from the first entry set. Formally, $B = B_1 \otimes B_2$ such that $B[i]=|\{t| f_l(t,r_1)=i\ \&\ f_l(t,r_2)\neq i\}|$.
\end{defn}
The tash XOR is aimed at obtaining the total number of the set difference. As Fig.~\ref{fig:tash-operator} shows, $B_1[5]=|\{t_5,t_6,t_7\}|$ and $B_2[5]=|\{t_6, t_7\}|$. Then $(B_1 \otimes B_2)[5]=|\{t_5\}|=1$.

%\begin{defn}[Tash NOT]
%The tash NOT is defined as $B=\overline{\mathcal{F}}_L(T, r)$ such that $B[i]=|\{t| f_l(t,r)\neq i\}|$, \ie the number of tags which are not mapped into the $i^{th}$ entry.
%\end{defn}
%The tash NOT is a monadic operator, targeting to obtain the remaining tags which are not mapped into the corresponding entry.

The above operators can be applied in a series of tash tables with the same dimension for a hybrid operation, \eg $B_1\oplus B_2 || B_3$. Tash AND and OR  satisfy operational laws such as associative law and commutative law, \eg $B_1\oplus B_2 = B_2 \oplus B_1$.  The design of tash operators is one of the attractive features of the tash framework, and it has never been proposed before. More importantly, we design and implement these operators in the physical layer to provide one-stop acquisition solution.

\subsection{Solution Sketch}

Tash is designed to reduce the overhead for air communications. It runs in the middle of the reader and upper application. The upper application passes a pair of arguments (\ie $r$ and $l$), or pairs of arguments (as well as operators) to \oursystem.  On the basis of the arguments, \oursystem generates one or more configuration files to manipulate the reader's reading. Finally, \oursystem abstracts the reading results to a tash table, which is returned to the upper application.

The rest of the paper is structured as follows.  We firstly present the tash design in $\S$\ref{section:design}. We next demonstrate the usage of our hash primitives in two classic applications in $\S$\ref{section:usage}. We then introduce the tash implementation using LLRP interfaces in $\S$\ref{section:implementation}. In $\S$\ref{section:microbenchmark} and $\S$\ref{section:usage-evaluation}, we present the microbenchmark and the usage evaluation. Finally, we conclude in $\S$\ref{section:conclusion} and present future directions.

%The next three sections elaborate the above issues in depth and provide technical details.

\section{Tash Design}
\label{section:design}

In this section, we introduce the background of selective reading in Gen2, and then present the technical details of our designs.

\begin{figure}[t!]
  \centering
  \includegraphics[width=0.7\linewidth]{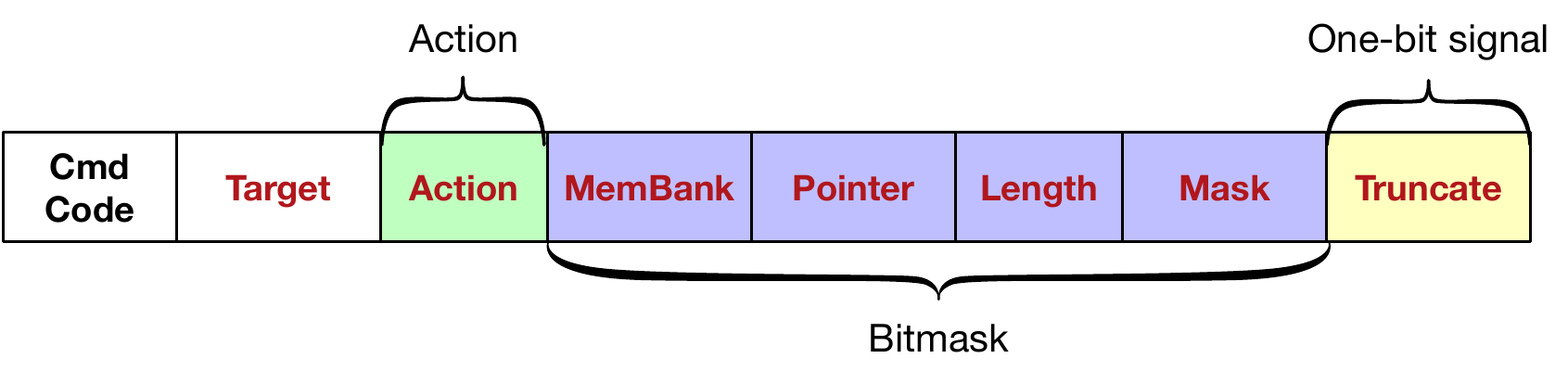}
   \vspace{-0.6cm}
   \caption{Formats of \texttt{Select} commands. }
  \label{fig:gen2-select}
    \vspace{-0.3cm}
\end{figure}

\subsection{Background of Gen2 Protocol}

The Gen2 standard defines air communication between readers and tags.  On the basis of \cite{gen2,zhang2009rfid}, we introduce its four central functions we will employ:

\textbf{F1: Memory Model.}
Gen2 specifies a simple tag memory model (pages 44 $\sim$ 46 of \cite{gen2}). Each tag contains four types of non-volatile memory blocks (called \emph{memory banks}): (1) \texttt{MemBank-0} is reserved for password associated with the tag. (2) \texttt{MemBank-1} stores the \texttt{EPC} number. (3) \texttt{MemBank-2} stores the \texttt{TID} that specifies the unchangeable tag and vendor specific information. (4) \texttt{MemBank-3} is a customized storage that contains user-defined data.

\textbf{F2: Selective Reading.}
Gen2 specifies that each inventory must be started with \texttt{Select} command (pages 72$\sim$73 of \cite{gen2}). The reader can use this command to choose a subset of tags that will participate in the upcoming inventory round. In particular, each tag maintains a flag variable \texttt{SL}. The reader can use the \texttt{Select} command to turn the \texttt{SL} flags of tags into \texttt{asserted} (\ie true) or \texttt{deasserted} (\ie false).  The \texttt{Select} command comprises six mandatory fields and one optional field apart from the constant cmd code (\ie $1010_2$), as shown in Fig.~\ref{fig:gen2-select}. The following fields are presented for this study.

%====Reserved for trans

\begin{table}[b!]
  \centering
  \small
  \caption{Actions in the \texttt{Select} command}
  \vspace{-0.3cm}
   \label{tab:action}
  \begin{tabular}{|c|c|c|}
    \hline
    Action code & Tag matching & Tag not-matching\\
    \hline
    $0$ & assert \texttt{SL} & deassert \texttt{SL}\\\hline
    $1$ & assert \texttt{SL} & do nothing\\\hline
    $2$ & do nothing & deassert\texttt{SL}\\\hline
    $3$ & negate \texttt{SL} & do nothing\\\hline
    $4$ & deassert \texttt{SL} & assert \texttt{SL}\\\hline
    $5$ & deassert \texttt{SL} & do nothing\\\hline
    $6$ & do nothing & assert \texttt{SL}\\\hline
    $7$ & do nothing & negate \texttt{SL}\\\hline
  \end{tabular}
\end{table}

$\bullet$ \texttt{Target}. This field allows a reader to change  \texttt{SL} flags or the inventoried flags of the tags. The inventoried flags are used when multiple readers are present. Such scenario is irrelevant to our requirements. Thus, we aim to change \texttt{SL} flags only by setting \texttt{Target=$100_2$}.

$\bullet$ \texttt{Action}. This field specifies an action that will be will performed by the tags. Table.~\ref{tab:action} lists eight action codes to which the tag makes different responses. For example,  the matching or not-matching tags assert or deassert their \texttt{SL} flags  when \texttt{Action=$0$}. We  leverage this useful feature to design tash operators.

$\bullet$ \texttt{MemBank}, \texttt{Pointer}, \texttt{Length} and \texttt{Mask}.  These four fields are combined to compose a \emph{bitmask}. The bitmask indicates which tags are matched or not-matched for an \texttt{Action}. The \texttt{Mask} contains a variable length binary string that should match the content of a specific position in the memory of a tag. The \texttt{Length} field defines the length of the \texttt{Mask} field in bits. The \texttt{Mask} field can be compared with one of the four types of memory banks in a tag. The \texttt{MemBank} field specifies which memory bank the \texttt{Mask} will be compared with. The \texttt{Pointer} field specifies the starting position in the memory bank where the \texttt{Mask} will be compared with. For example, if we use a tuple $(b,p,l, m)$ to denote the four fields, then only the tags with data starting at the $p^{th}$ bit with a length of $l$ bits in the $b^{th}$ memory bank that is equal to $m$ are matched.

\begin{figure}[t!]
  \centering
  \includegraphics[width=0.8\linewidth]{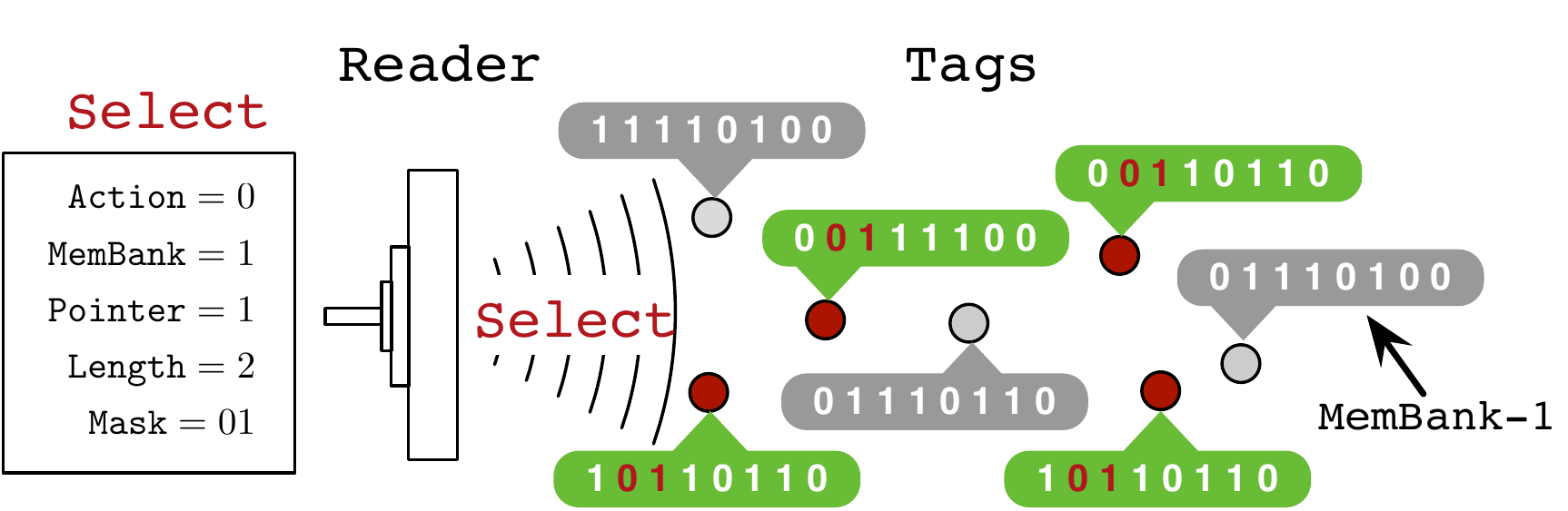}
   \caption{Illustration of selective reading in Gen2. \textnormal{There are total $7$ tags covered by a reader. The reader initiates a selective reading using a \texttt{Select} command with parameters: \texttt{Action}=0, \texttt{MemBank}=1, \texttt{Pointer=1}, \texttt{Length=1}, \texttt{Mask}=01. This command means that these tags (highlighted with dark red) whose data starting at the first bit with a length of $2$ bits in the \texttt{MemBank-1} equals $01_2$ are selected to participate in the incoming inventory, while other tags (with gray color) that do not meet the condition remain silent. As a result, only $4$ tags are collected in this round of inventory.}}
  \label{fig:selective-reading}
    \vspace{-0.2cm}
\end{figure}

To visually understand the selective reading, we show an example in Fig.~\ref{fig:selective-reading} in which 4 out of 7 tags are selected to participate in the incoming inventory. Complex and multiple subsets of tags can be facilitated by issuing a group of \texttt{Select} commands to choose a subset of tags before an inventory round starts. For example, we can issue two \texttt{Select} commands: one  for division and another for one-bit reply. Note the \texttt{Truncate} enabled \texttt{Select} command  must be the last one if multiple selection commands are issued \cite{gen2}.

\textbf{F3: Truncated Reply.} Gen2 allows tags to reply a \emph{truncated} reply (\ie replying a part of  \texttt{EPC}) through a special \texttt{Select} command with an enabled \texttt{Truncate} field, making a one-bit presence signal possible. When \texttt{Truncate} is enabled (\ie set to $1$), then the corresponding bitmask is not used for the division of tags, but lets tags reply with a portion of their \texttt{EPC}s following the pattern specified by the bitmask. Note that when \texttt{Truncate} is enabled, the \texttt{MemBank} must be set to the \texttt{EPC} bank (\ie $\texttt{MemBank}=1$) and such \texttt{Select} command must be the last one.

\textbf{F4: Query Model.}
Followed by a group of \texttt{Select} commands,  \texttt{Query} command (see page 76$\sim$80 of \cite{gen2}.) starts a new \emph{inventory round} over a subset of tags, chosen by the previous \texttt{Select} commands. There are $7$ fields in the \texttt{Query} command. We only focus on \texttt{Sel} field, which is most tightly relevant to the selective reading. As mentioned above, the \texttt{Select} command has divided the tags into two opposite subsets with asserted and deasserted \texttt{SL} respectively. The \texttt{Sel} field specifies which subset will reply in the current inventory round. If \texttt{Sel=$11_2$}, the tags with asserted \texttt{SL} reply. If \texttt{Sel=$10_2$}, the tags with deasserted \texttt{SL} reply. We choose the tags with asserted \texttt{SL} by default.

\subsection{Design of the Tash Function}

An $l$-bit tash function is essentially a hash function that is indispensable to HEPs. We design the tash function while following the three principles outlined as follows. The first principle requires that the tash result must be dependent on the input \texttt{EPC} and the seed. Moreover, it must be predictable as long as all the input parameters are known. The second principle requires the output values to be random, \ie uniformly distributed in $[0, 2^l-1]$. Even a one bit difference in the input will result in a completely different outcome. The third principle requires a method that can access the tash result of a tag directly or indirectly.

We have constructed the tash function as follows by applying the aforementioned principles:  given a tag with an \texttt{EPC} of $t$, we firstly calculate the \emph{hash value} of the \texttt{EPC} offline, using a common perfect hash function like $128$-bit \texttt{MD5} or \texttt{SHA-1}. Let $h(t)$ denote the calculated hash value. We then write  $h(t)$ into the tag's user-defined memory bank of the tag, \ie \texttt{MemBank-3}, for later use.
\begin{defn}[Tash value]
\label{defn:tash-value}
	The $l$-bit tash value of  tag $t$ challenged by  seed $r$ is defined as the value of the sub-bitstring starting from the $r^{th}$ bit and ending at the $(r+l-1)^{th}$ bit in the \texttt{MemBank-3} of the tag.
\end{defn}

Evidently, $f_l(t,r)$ is actually a portion of $h(t)$, and thus,  the parameter $r\in [0,\mathcal{L}-1]$ and $l\in [1, \mathcal{L}-r]$, where $\mathcal{L}$ is the length of the hash value (\eg $128$ bits).  Fig.~\ref{fig:tash-function-example} shows an example wherein the \texttt{MemBank-1} and \texttt{MemBank-3} of the tag store its \texttt{EPC} $t$ and the hash value $h(t)$, respectively. When $r=5$ and $l=4$ are inputted,  the tash value that this tag outputs is $1010_2$, which is the sub-bitstring of $h(t)$ starting from the $5^{th}$ bit and ending at the $8^{th}$ bit in \texttt{MemBank-3}, \ie $f_4(t,5)=1010_2$. 

Our design does not require a tag to equip a real hash function or the engagement of its chip. It clearly applies the preceding principles. First,  $f_l(t,r)$ is evidently repeatable, predicable and dependent on the inputs. Second, the randomness of $f_l(t,r)$ is derived from $h(t)$ and $r$, which are supposed to have a good randomness quality. Third,  we have two ways to access the tash value. We can use the memory \texttt{Read} command to access  \texttt{MemBank-3} of a tag directly, or use the selective reading function to access the tash value indirectly (discussed later).

\begin{figure}[t!]
  \centering
  \includegraphics[width=0.8\linewidth]{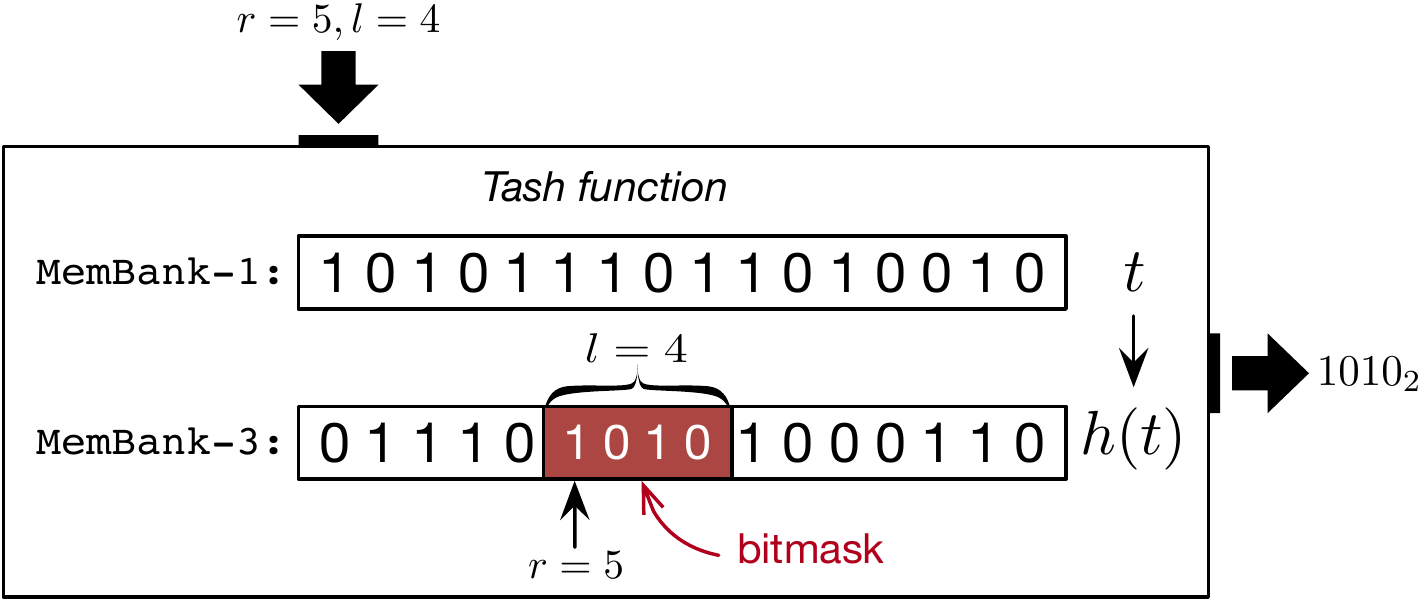}
   \caption{Illustration of a tash function. \textnormal{The result of a tash function of $f_4(t, 5)$ is equal to $1010_2$.}}
  \label{fig:tash-function-example}
    \vspace{-0.3cm}
\end{figure}

\textbf{Discussion:}  A few points are worth-noting about the design:

$\bullet$ As the tash value is a portion of the hash value, if two random numbers may cover a common sub-string. For example, if $r_1$ and $r_2$ differ by $1$, there exist $l-1$ same bits with $50\%$ of probability that two hash values are same,  although such case occurs with a small probability, \ie $\approx 127/(128\times 128)\times 0.5=0.0039$. If some upper applications require extremely strong independence, we should generate the second random number $r_2$ meeting the condition of $r_2<r_1-l$ and $r_2\geq r+l$, so as to avoid the common coverage and potential relevance.

$\bullet$ The design of tash function involves the \texttt{MemBank-3}, \ie the user-defined storage. We can use \texttt{Write} command to store any data into this memory bank. Our compatibility report (shown in $\S$\ref{section:microbenchmark}) suggests that almost all types of tags support both \texttt{MemBank-3} and \texttt{Write} command except one read-only type (\ie ImpinJ Monza R6). Our approach is generally practicable.

$\bullet$ Our design targets at enabling COTS tags, billions of which have been deployed in recent years,  to obtain performance advantages from well-studied hash based protocols, instead of enhancing their security or privacy preservation. Our design still follows the current COTS tag's security mechanism, \ie password protected memory access.

$\bullet$ Tash function also offers a good feature that the computation is one way and irreversible, \ie the output reveals nothing about the input. This feature is inherited from the hash function. It may be useful for privacy protection in practice. However, this topic is beyond the scope of this work.

\subsection{Design of the Tash Table Function}
The tash table function treats a reader and multiple tags as if they were a single virtual node, outputting a tash table.  For simplicity, we use
$$
S ( \underbrace{a}_{\texttt{Action}}, \overbrace{b}^{\texttt{MemBank}}, \underbrace{p}_{\texttt{Pointer}}, \overbrace{l}^{\texttt{Length}}, \underbrace{m}_{\texttt{Mask}}, \overbrace{u}^{\texttt{Truncate}})
$$
to denote a selection command (\ie \texttt{Select}) with an \texttt{Action} $(a)$, a \texttt{MemBank} $(b)$, a \texttt{Pointer} $(p)$, a \texttt{Length} $(l)$, a \texttt{Mask} $(m)$ and a \texttt{Truncate} $(u)$. The command aims to select a subset of  tags with a sub-bitstring that starts from the $p^{th}$ bit and ends at the $(p+l-1)^{th}$ bit in the $b^{th}$ memory bank that is equal to $m$. These selected tags are requested to take an action $a$. The action codes are shown in Table.~\ref{tab:action}. In particular, if $u=1$, then each tag will reply with a truncated \texttt{EPC} number.

The tash table function  is designed as follows.  An $l$-bit table $B$ consists of a total of  $2^l$ entries, each of which contains the amount of tags mapped into it. In particular, the index number of each entry, which ranges from $0$ to $2^l-1$, is actually the tash values of the tags mapped into this entry, \ie $B[\textcolor{dark-red}{\underline{i}}]=|\{t|f_l(t,r)=\textcolor{dark-red}{\underline{i}}\}|$. When constructing the $i^{th}$ entry, the reader performs selective reading with two selection commands as follows:
$$
S_1(0, 3, r, l, i, 0) \text{ and } S_{\Box}(1, 1, 1, 1, 1, 1)
$$
Command $S_1$ selects a subset of  tags with a sub-bitstring that starts from the $r^{th}$ bit and ends at the $(r+l-1)^{th}$ bit in the \texttt{MemBank-3} that is equal to $i$.  Notably, the involved sub-bitstring is the tash value of a tag, \ie $f_l(t,r)$, which refers to Definition.~\ref{defn:tash-value}. Consequently, only  tags with tash values equal to $i$ are selected to participate in the incoming inventory, \ie counted by the $i^{th}$ entry.  The second command $S_\Box$ enables the selected tags to reply with the first bit of their \texttt{EPC} numbers for the one-bit signals. We call such inventory round as an \emph{entry-inventory}. In this manner, we can obtain the whole tash table by launching $2^l$ entry-inventories.

To visually understand the procedure, we illustrate an example in Fig.~\ref{fig:tash-table-function-example}, where $r=5$ and $l=2$. The tash table contains $2^2$ entries; hence,  four entry-inventories are launched. Their selection commands are defined as follows:
\begin{equation*}
\begin{split}
\text{\ding{182} } S_1(0, 3, \underline{5}, \underline{2}, \textcolor{dark-red}{\underline{0}}, 0) \text{ and } S_{\Box}(1, 1, 1, 1, 1, 1)\\
\text{\ding{183} }S_1(0, 3, \underline{5}, \underline{2}, \textcolor{dark-red}{\underline{1}}, 0) \text{ and } S_{\Box}(1, 1, 1, 1, 1, 1)\\
\text{\ding{184} }S_1(0, 3, \underline{5}, \underline{2}, \textcolor{dark-red}{\underline{2}}, 0) \text{ and } S_{\Box}(1, 1, 1, 1, 1, 1)\\
\text{\ding{185} }S_1(0, 3, \underline{5}, \underline{2}, \textcolor{dark-red}{\underline{3}}, 0) \text{ and } S_{\Box}(1, 1, 1, 1, 1, 1)
\end{split}
\end{equation*}
For the third entry-inventory, the \texttt{Mask} field is set to $2$ because the index of the third entry is $2$.  Four tags (\ie $t_5$, $t_6$, $t_7$ and $t_8$) are selected to join in this entry-inventory. Thus,  $\mathcal{F}_2(T,5)[2]=4$.

For a tash table, note that ($1$) the sum of all its entries is equal to the total number of tags, and ($2$) it allows an application to selectively construct the entries of a tash table becaues each entry-inventory are independent of each other and completely controllable. For example, we can skip the inventories of these entries that are predicted to be empty.

\subsection{Design of  Tash Operators}

\begin{figure}[t!]
  \centering
  \includegraphics[width=0.8\linewidth]{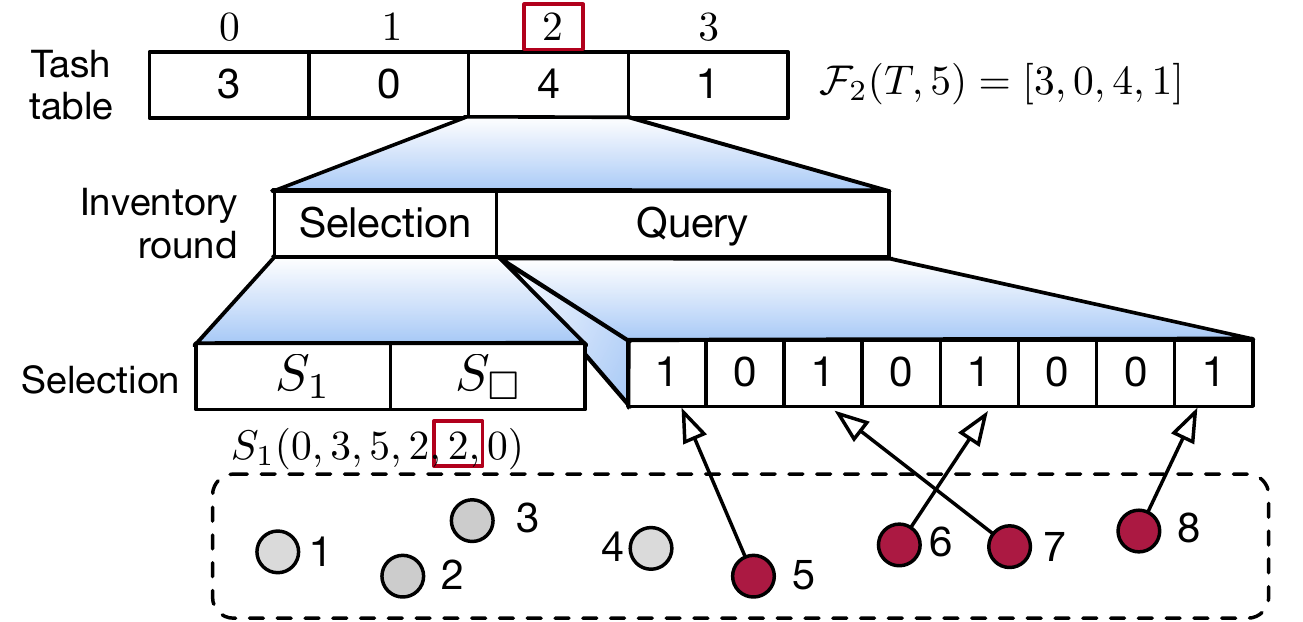}
   \caption{Illustration of creating a tash table.
   \textnormal{Given that $r = 5$ and $l = 2$, $\mathcal{F}_2(T,5) = [3,0,4,1]$. Zooming into the $3^{rd}$ entry-inventory,  tags $t_5,t_6,t_7$ and $t_8$ are selected to join the inventory. $S_{\Box}$ means this is the end command.}   }
  \label{fig:tash-table-function-example}
    \vspace{-0.3cm}
\end{figure}

A tash operator is connected to two tash tables, which have the same dimensions but are constructed using two different seeds. When two seeds, $r_1$ and $r_2$, are given,  we can obtain two $l$-bit tash tables: $B_1=\mathcal{F}_l(T,r_1)$ and $B_2=\mathcal{F}_l(T,r_2)$. Our objective is to obtain a final tash table $B$ by performing one of the subsequent tash operators on $B_1$ and $B_2$.

\textbf{Tash AND}. If $B=B_1\oplus B_2$, then each entry of $B$ denotes  the number of tags that are concurrently mapped into the corresponding entries of  $B_1$ and $B_2$. The selection commands for the $i^{th}$ entry-inventory are defined as follows:
$$
S_1(\textcolor{dark-red}{\underline{0}},3,r_1,l,i,0), S_2(\textcolor{dark-red}{\underline{2}},3,r_2,l,i,0), S_{\Box}
$$
From the action codes shown in Table.~\ref{tab:action}, the purpose of $S_1$ with action code of $0$  is to select tags $\in B_1[i]$ and deselect tags $\notin B_1[i]$. $S_2$ with action code of $2$  deselects tags  $\notin B_2[i]$ and results in tags $\in B_2[i]$  doing nothing.  After $S_1$ is received, each tag exhibits one of two states, \ie selected or deselected. Then, $S_2$ will make the selected tags remain in their selected states if they match its condition (\ie doing nothing); otherwise, it changes their states to the deselected states (\ie selected $\rightarrow$ deselected), which is equivalent to removing tags $\notin B_2[i]$ from tags $\in B_1[i]$. Meanwhile, the tags deselected by $S_1$ remain in their states regardless of whether they match (\ie do nothing) or not match (\ie deselected $\rightarrow$ deselected) the condition of $S_2$. Finally, $S_{\Box}$ is reserved for the one-bit presence signal.

\textbf{Tash OR}. If $B = B_1 || B_2$, then each entry of $B$ is the number of tags that mapped into the corresponding entry of either $B_1$ or $B_2$. The selection commands for the $i^{th}$ entry-inventory are defined as follows:
$$
	S_1(\textcolor{dark-red}{\underline{0}},3,r_1,l,i,0), S_1(\textcolor{dark-red}{\underline{1}},3,r_2,l,i,0),S_{\Box}
$$
Similarly,  $S_1$ selects tags $\in B_1[i]$ and deselect tags $\notin B_1[i]$.  $S_2$ with action code of $1$ (see Table.~\ref{tab:action}) allows tags $\in B_2[i]$ to be selected as well, but tags $\notin B_2[i]$ remain in their states (\ie do nothing), some of  these tags may have been selected by $S_1$. The process is equivalent to holding the tags selected by $S_1$ and incrementally  adding the new tags selected by $S_2$.

\textbf{Tash XOR}. If $B = B_1 \otimes B_2$, then each entry of $B$ is the number of tags that are mapped into the corresponding entry of $B_1$ but not into the entry of $B_2$. The selection commands for the $i^{th}$ entry-inventory are defined as follows:
$$
	S_1(\textcolor{dark-red}{\underline{0}},3,r_1,l,i,0), S_2(\textcolor{dark-red}{\underline{5}},3,r_2,l,i,0),S_{\Box}
$$
Similarly,  $S_2$ allows tags $\in B_2[i]$ to be deselected (\ie removed from tags $\in B_1[i]$) and tags $\notin B_2[i]$ to do nothing. This process is equivalent to removing tags $\in B_2[i]$ from tags $\in B_1[i]$.

%\textbf{Tash NOT}. If $B=\overline{\mathcal{F}}_l(T, r)$, then $B[i]=|\{t| f_l(t,r)\neq i\}|$. Namely, each entry in the final table is the number of tags not-mapped into this entry in the $r$ table. Only one extra selection command is required for the $i^{th}$ round inventory:
%$$
%	S_1(100_2,3,r_1,l,m_i,0), S_{\Box}
%$$
%It should be noted that the operator of Tash NOT is an exception, which must be placed in the first position. Because the selection commands must execute in sequence, we could not store any intermediate results for the NOT when it appears in the none first positions.

\textbf{Tash hybrid.} The aforementioned  three operators can be further applied to a hybrid operation. When $k$ seeds (\ie $r_1,\cdots, r_k$) are given, we can obtain $k$ tash tables. The selection commands for the $i^{th}$ entry-inventory can be designed as follows:
\begin{eqnarray*}
	&& S_1(0,3,r_1,l, i,0), S_2(\text{\texttt{\texttt{AC}}},3,r_2,l, i,0),\\
	&& \cdots, S_k(\text{\texttt{\texttt{AC}}},3,r_k,l, i,0),S_{\Box}
\end{eqnarray*}
where \texttt{AC} represents the \texttt{Action} code, which is set to $2$, $1$ and $5$ for tash AND, OR and XOR, respectively. The action code of the first command is always set to $0$.  For example, the selection commands in the $i^{th}$ entry-inventory for $\mathcal{F}_l(T, r_1) \oplus \mathcal{F}_l(T, r_2)  || $ $\mathcal{F}_l(T,r_3) \otimes \mathcal{F}_l(T,r_4)$ are given by:
\begin{eqnarray*}
	&&S_1(\underline{0},3,r_1,l, i,0), S_2(\underline{2},3,r_2, l, i,0), \\
	&&S_3(\underline{1},3,r_3,l, i,0), S_4(\underline{5},3,r_4, l, i,0),S_{\Box}
\end{eqnarray*}
We leverage the action of a selection command to perform an operation in the physical layer before an entry-inventory starts, therefore, we introduce minimal additional communication overhead, \ie broadcasting multiple \texttt{Select} commands. Compared with the multiple acquisitions of bitmaps used by prior HEPs, our solution provides a one-stop solution that can significantly reduce the total overhead in such situation.

\subsection{Discussion}

\textbf{Comparison with bitmap.} A tash table evidently takes a considerably longer time to obtain than a bitmap because a bitmap requires only one round of inventory, whereas a tash table requires multiple rounds. The additional time consumption is the trade-off for practicality because the reply of a COTS tag at the slot level is out of control. Nevertheless, this additional cost brings an additional benefit, \ie a tash table has the exact number of tags mapped onto its each entry, which cannot be suggested by a bitmap. Moreover, a one-stop operator service can save more time.

\textbf{Embedded pseudo-random function.} Qian \etal \cite{qian2010asap} and Shahzad \etal \cite{shahzad2012every} proposed a similar concept of utilizing a pre-stored random bit-string to construct a lightweight pseudo-random function. These studies have inspired our work.  However, their main objective of these previous researchers is to accelerate the calculation of a random number, which still requires the engagement  with the chip of a tag, and thus,  has never been implemented in practice. In the present work, we do not require additional efforts on changing the logics of  a tag chip and we associate this concept with the function of selective reading, moving the main task from a tag to a reader. Our design not only preserves the good features of the hash function but also gives a practical solution. This process has never been performed before.

\textbf{Channel error.} Channel error is one of the most notorious problems of HEPs because pure one-bit signal transmission is vulnerable to ambient interference. Thus, an additional error control mechanism is expected to be applied to HEPs. In the Gen2 protocol, the CRC8 code is automatically appended to the data transmitted between a reader and a tag for error detection, even when one bit of EPC is transmitted. The corrupt data will be retransmitted. Therefore, we should not be concerned with channel error.

\section{Tash Usage}
\label{section:usage}

This section revisits two classic problems of HEPs for usage study. We propose two practical solutions that use tash primitives for these problems. Note that in spite of two demonstration presented in this section, our tash primitives especially the tash table can serve any kind of HEPs. 

\subsection{Usage I: Cardinality Estimation}

Cardinality estimation aims to estimate the total number of tags by using one-bit presence signals that are received without collecting each individual tag. The problem  is formally defined as follows:

\begin{problem}
\label{problem:cardinality}
When a tag population of an unknown size $n$,  a tolerance of $\beta\in (0,1)$, and a required confidence level of $\alpha\in(0,1)$ is given, how can the number of tags $\widehat{n}$ be estimated such that $\Pr(|\widehat{n}-n|\leq \beta n)\geq \alpha$?
\end{problem}

A naive method would be to add all the entries of a tash table together or let all tags reply at the first entry.  Since each tag participates in one and only one entry-inventory,  the final number is exactly equal to $n$.  Keep in mind that our each entry corresponds to a complete round inventory. The naive method is equivalent to collecting them all, which is extremely time-consuming. We subsequently provide a reliable solution in a probabilistic way.

\textbf{Proposed Estimator}. We leverage the number of tags mapped into the first entry of a tash table to estimate $n$. Let $X$ be the random variable to indicate the value of the first entry of a tash table. Since $n$ tags are randomly and uniformly assigned into $2^l$ entires, we have
\begin{equation}
	\Pr(X=m) = \binom{n}{m}p^m (1-p)^{n-m}
\end{equation}
where $p=1/2^l$. Evidently, variable $X$ follows a standard Binomial distribution with the parameters $n$ and $p$, \ie $X\sim B(n,p)$. Therefore the expected value $\mu=np$ and variance $\delta=np(1-p)$. By equating the expected value and an instanced value $m$, our estimator $\widehat{n}$ is given by:
\begin{equation}
	\widehat{n}= m/p = m 2^l
\end{equation}
The estimator only requires the first entry of the hash table, so it skips inventories of other entries. We must choose an appropriate $l$ to ensure the estimation error within the given tolerance level $\beta$ with a confidence of greater than $\alpha$.

\textbf{Analysis.} For the sake of simplicity, we use a Gaussian model to approximate the above distribution due to the central limit theory. Let the random variable $Y=(X-\mu)/\delta \sim \mathcal{N}(0,1)$.  We can always find a constant $c$, which satisfies
\begin{equation}
\label{eqn:erf}
\alpha = \Pr(-c\leq Y \leq c) = \text{erf}(c/\sqrt{2})	
\end{equation}
where $\text{erf}$ is the Gaussian error function. Since we require
\begingroup\makeatletter\def\f@size{9}\check@mathfonts
\begin{eqnarray*}
&&\Pr(|\widehat{n}-n|\leq \beta n) = \Pr\left((1-\beta)n\leq \widehat{n} \leq(1+\beta)n\right)\\
&&=\Pr\left((1-\beta)n\leq \frac{m}{p} \leq (1+\beta)n)\right)\\
&&=\Pr\left(\frac{(1-\beta)np-\mu}{\delta} \leq \frac{m-\mu}{\delta} \leq  \frac{(1+\beta)np-\mu}{\delta}\right)\\
&& \geq \alpha = \Pr\left(-c \leq Y \leq  c\right)
\end{eqnarray*}
\endgroup
, we can find a constant $c$ by subjecting to the below inequality:
\begingroup\makeatletter\def\f@size{9}\check@mathfonts
\begin{eqnarray*}
c &\leq & \max\{\frac{(1+\beta)np-\mu}{\delta}, -\frac{(1-\beta)np-\mu}{\delta}\}\nonumber\\
&=&\frac{(1+\beta)np-\mu}{\delta} = \frac{\beta}{1-p}
\end{eqnarray*}
\endgroup
By substituting $p=1/2^l$ into the above inequality, we obtain
\begin{equation*}
l\leq \log_2 \frac{c}{c-\beta}	
\end{equation*}
where $c=\sqrt{2}\cdot\text{erf}^{-1}(\alpha)$. Therefore, we can obtain the following theorem.

\begin{thm}
  \label{thm:estimation}
   The optimal dimension of the tash table is equal  to $ \lceil\log_2 \frac{\sqrt{2}\cdot\text{erf}^{-1}(\alpha)}{\sqrt{2}\cdot\text{erf}^{-1}(\alpha)-\beta}\rceil$, which results in an estimation error $\leq \beta$ with a probability of at least $\alpha$.
\end{thm}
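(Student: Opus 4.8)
The plan is to repackage the analysis preceding the statement into a clean chain of implications, whose heart is translating the probabilistic accuracy requirement of Problem~\ref{problem:cardinality} into a deterministic condition on the dimension $l$. First I would fix the estimator $\widehat{n}=m/p$ with $p=2^{-l}$ and observe that, because $\widehat{n}$ is an affine image of the first-entry count $X\sim B(n,p)$, the target event rewrites exactly: $|\widehat{n}-n|\le\beta n$ holds iff $(1-\beta)\mu\le X\le(1+\beta)\mu$, where $\mu=np$. This step is purely algebraic and uses only $\mu=np$, so it is reversible and loses nothing.

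Next I would invoke the central limit theorem to approximate $X$ by a Gaussian and pass to the standardized variable $Y=(X-\mu)/\delta$. The key observation is that both one-sided margins collapse to a common value: since $\mu=np$, one checks $\frac{(1+\beta)np-\mu}{\delta}=-\frac{(1-\beta)np-\mu}{\delta}=\frac{\beta}{1-p}$, so the admissible region for $Y$ is symmetric. Because $x\mapsto\Pr(-x\le Y\le x)$ is monotone increasing on $x\ge 0$, the requirement $\Pr(|\widehat{n}-n|\le\beta n)\ge\alpha$ reduces to the single scalar inequality $\frac{\beta}{1-p}\ge c$, with $c$ defined by $\alpha=\mathrm{erf}(c/\sqrt{2})$, i.e. $c=\sqrt{2}\,\mathrm{erf}^{-1}(\alpha)$. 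Solving $\frac{\beta}{1-p}\ge c$ for $p=2^{-l}$ and clearing denominators then yields $l\le\log_2\frac{c}{c-\beta}$, where one must note that $c>\beta$ is needed for the threshold to be positive and well defined.

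The last step is the genuinely delicate one: turning the real-valued threshold $\log_2\frac{c}{c-\beta}$ into a single integer ``optimal dimension.'' Here I would make explicit the two competing objectives — accuracy, which the inequality above caps, and communication overhead, which favors a larger $l$ since only about $n/2^l$ tags reply in the first entry-inventory — and argue that together they single out the extreme admissible integer; substituting $c=\sqrt{2}\,\mathrm{erf}^{-1}(\alpha)$ then recovers the claimed closed form $\lceil\log_2\frac{\sqrt{2}\,\mathrm{erf}^{-1}(\alpha)}{\sqrt{2}\,\mathrm{erf}^{-1}(\alpha)-\beta}\rceil$. I expect this rounding-and-optimality step to be the main obstacle to state cleanly, both because one must pin down precisely what ``optimal'' optimizes and because the ceiling must be reconciled with the upper bound $l\le\log_2\frac{c}{c-\beta}$. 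A careful write-up should additionally flag the Gaussian approximation, together with the specific normalization by $\delta$ that is what renders the final bound independent of the unknown $n$, as the one nonrigorous ingredient, valid when $\mu=np$ is not too small.
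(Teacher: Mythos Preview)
Your proposal follows essentially the same line as the paper's analysis: Gaussian approximation of the first-entry count, reduction to the inequality $\beta/(1-p)\ge c$ with $c=\sqrt{2}\,\mathrm{erf}^{-1}(\alpha)$, and then solving for $l$. Your treatment is in fact more careful than the paper's --- the paper simply asserts the ceiling without reconciling it with the derived upper bound $l\le\log_2\tfrac{c}{c-\beta}$, so the rounding concern you flag is a genuine loose end in the original rather than a gap in your plan.
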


\begin{figure}[t!]
  \centering
  \subfigure[Tashing once]{
  	\label{fig:missing-detection-1}
  	\includegraphics[width=0.7\linewidth]{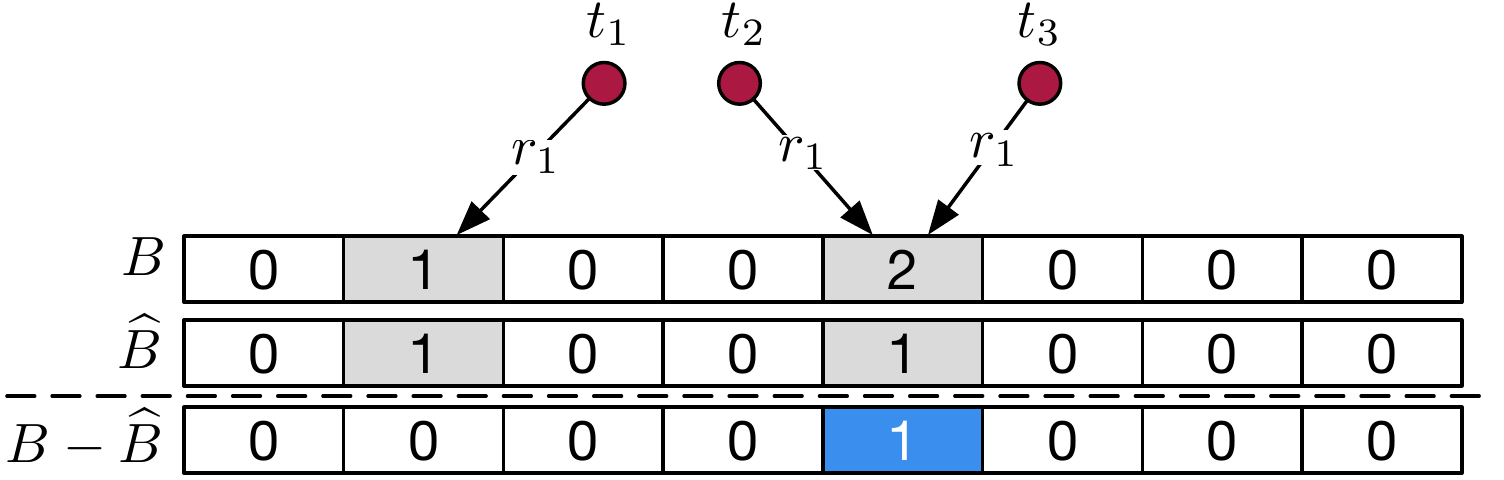}
  }
  \subfigure[Tashing twice]{
  	\label{fig:missing-detection-2}
  	\includegraphics[width=0.7\linewidth]{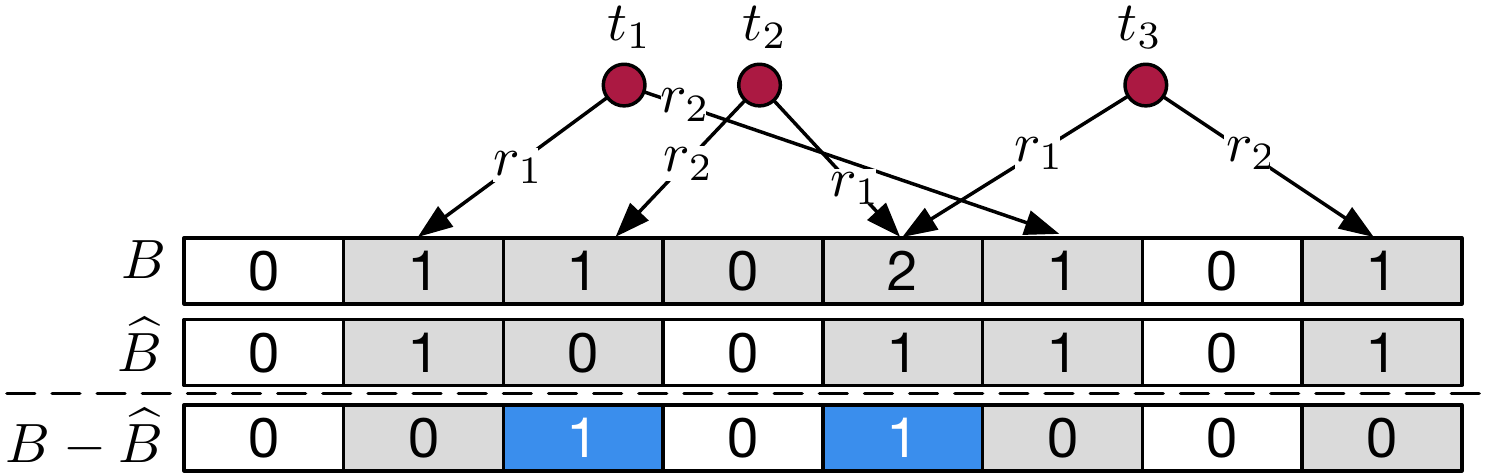}
  }
    \vspace{-0.2cm}
    \caption{An example of missing detection. \textnormal{$B$ is the intact tash table generated using the known \texttt{EPC}s while $\widehat{B}$ is an instance over the tags in the current scene. }
    \vspace{-0.3cm}
}
%===reserved for extension.
%  \caption{An example of missing detection. \textnormal{(a) There are two hash tables, $B$ and $\widehat{B}$. $B$ is the intact tash table generated using the \texttt{EPC}s stored in database while $\widehat{B}$ is generated over the tags in the current scene. (b) The missing tags can be exactly pinpointed by using the combined tash table, which are tash ORed over two or more different tash tables.}}
\end{figure}

\subsection{Usage II: Missing Tag Detection}

The purpose of missing tag detection is to quickly find out the missing tags without collecting all the tags in the scene. Such detection is very useful, especially when  thousands of tags are present. We formally define the problem of detecting missing tags in Problem~\ref{problem:missing-detection}. We assume that the \texttt{EPC}s of all the tags in a closed system are  stored in a database and known in advance. This assumption is reasonable and necessary, because it is impossible for us to tell that a tag is missing without any prior knowledge of its existence.

\begin{problem}
\label{problem:missing-detection}
How to quickly identify $m$ missing out of $n$ tags with a false positive rate of $\gamma$ at most?
\end{problem}

\textbf{Proposed detector.} The underlying idea is to compare two tash tables $B$ and $\widehat{B}$.  $B$ is an \emph{intact tash table}  created by tashing all the known \texttt{EPC}s which are stored in the database, while $\widehat{B}$ is an \emph{instance tash table} obtained from the tags in the scene. We can detect the missing tags through comparing the difference between $B$ and $\widehat{B}$. If the residual table $B-\widehat{B}$ (\ie entry-wise subtraction) equals $0$,   no missing tag event happens. Otherwise, the tags mapped into the non-zero entries of the residual table are missing.
%===== reserved for extension ==========
%Different from the prior work, which \emph{probabilistically} ensure the successful detection of missing event, our approach can \emph{deterministically} knows the occurrence of missing event since the exact number of missing tags can be acquired by adding up non-zero entries in the differential table.
 Fig.~\ref{fig:missing-detection-1} illustrates an example in which three tags, $t_1$, $t_2$ and $t_3$, are mapped into the intact tash table $B$. $\widehat{B}$ is an instance table where tag $t_2$ is missing, and thus $\widehat{B}[4]=1$. Consequently, $(B-\widehat{B})[4]=1$, we can definitely infer that one tag is missing. However,  it is impossible for us to tell which tag is missing because $t_2$ and $t_3$ are simultaneously mapped into the fourth entry.

Inspired by the Bloom filter\cite{broder2004network}, we perform $k$ tashings to identify the missing tags as follows:
\begin{equation}
	B = \mathcal{F}_l(T, r_1) || \dots || \mathcal{F}_l(T,r_k)
\end{equation}
The final $B$ after tash ORs is considered to use $k$ independent hash functions (\ie induced by $k$ random seeds) to map each tag into $B$ for $k$ times, as shown in Fig.~\ref{fig:missing-detection-2}.  The residual table of $B-\widehat{B}$ is therefore viewed as a Bloom filter which represents the missing tags. Thereafter,  to answer a query of whether a tag $t$ is missing,  we check whether all entries set by $f_l(t,r_1), \cdots$ and $f_l(t,r_k)$ in the residual table have a value of  non-zero. If the answer is yes, then tag $t$ is the missing one. Otherwise, it is not the missing tag. Fig.~\ref{fig:missing-detection-2} illustrates an example in which each tag is tashed twice. The missing tag $t_2$ can be identified because both the $2^{rd}$ and the $4^{th}$ entry in the residual table  have value of  non-zero.   Despite multiple tashings,  the query may yield a \emph{false positive}, where it suggests a tag is missing even though it is not.

\textbf{Analysis.} To lower the rate of false positive rate, it is necessary to answer two questions.

(1) \emph{How many tash functions do we need?} Given the table dimension $l$, we expect to optimize the number of tash functions. There are two competing forces: using more tash functions gives us more chance to find a zero bit for a missing tag, but using fewer tash functions increases the fraction of zero bits in the table.  After $m$ missing tags are tashed into the table, the probability that a specific bit is still $0$ is $(1-\frac{1}{L})^{km}\approx e^{-km/L}$ where $L=2^l$. Correspondingly, the probability of a false positive $p$ is given by
\begin{equation}
p = (1-e^{-km/L})^k
\end{equation}
Namely,  a missing tag falls into $k$ non-zero entries.  Lemma.~\ref{lem:optimal-tash-function-number} suggests that the optimal number of tash functions is achieved when $k=\ln 2 \cdot (L/m)$.
\begin{lem}
	\label{lem:optimal-tash-function-number}
	The false positive rate is minimized when $p=(1/2)^k$ or equivalently $k=\ln 2 \cdot (L/m)$.
\end{lem}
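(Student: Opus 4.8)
The plan is to treat $k$ as a continuous parameter and minimize the false-positive probability $p(k) = (1 - e^{-km/L})^k$ by ordinary calculus, then read off the stated optimum. Since the logarithm is monotone, minimizing $p$ is equivalent to minimizing $\ln p = k \ln(1 - e^{-km/L})$, which is analytically more convenient because it strips away the outer exponent.

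First I would introduce the substitution $x = e^{-km/L}$, so that $k = -\frac{L}{m}\ln x$ and the objective collapses to the remarkably symmetric form
\begin{equation*}
\ln p = -\frac{L}{m}\,\ln(x)\,\ln(1-x),\qquad x\in(0,1).
\end{equation*}
Because the prefactor $-L/m$ is negative, minimizing $p$ is the same as \emph{maximizing} $h(x) := \ln(x)\,\ln(1-x)$ over $(0,1)$. This substitution is the crux of the argument: it converts an awkward expression in $k$ into a product whose structure is transparent.

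The key observation is that $h$ is invariant under $x \mapsto 1-x$, since this map merely swaps its two factors; hence $h$ is symmetric about $x = \tfrac12$. I would then verify that $x=\tfrac12$ is a critical point by checking that $h'(x) = \tfrac{\ln(1-x)}{x} - \tfrac{\ln x}{1-x}$ vanishes there, and argue it is the unique interior maximum: $h$ is strictly positive on $(0,1)$ yet tends to $0$ at both endpoints (as $-x\ln x \to 0$), so a single symmetric stationary point must be the global maximum. Translating back through the substitution, $x = \tfrac12$ forces $e^{-km/L} = \tfrac12$, i.e. $k = \ln 2 \cdot (L/m)$, and at this value $p = (1 - \tfrac12)^k = (1/2)^k$, which is exactly the claim.

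The main obstacle — really the only step requiring care — is establishing that the critical point $x = \tfrac12$ is a \emph{global} maximum of $h$ rather than merely a stationary point; the symmetry-plus-vanishing-at-the-boundary argument settles this cleanly and sidesteps a messier second-derivative computation. A minor remaining caveat is that $k$ must be an integer in practice, so the true optimum is the integer nearest $\ln 2 \cdot (L/m)$; I would remark on this rounding but report the continuous minimizer as the stated answer, as is standard in Bloom-filter analyses.
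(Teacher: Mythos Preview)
Your argument is correct and is precisely the standard Bloom-filter optimization: the substitution $x=e^{-km/L}$, the symmetry of $\ln x\,\ln(1-x)$, and the boundary-vanishing observation together pin the optimum at $x=\tfrac12$, yielding $k=\ln 2\cdot(L/m)$ and $p=(1/2)^k$. The only soft spot is the uniqueness of the interior critical point---symmetry plus vanishing at the endpoints does not by itself rule out a symmetric double-hump---but this is easily patched (e.g., by checking that $(1-x)\ln(1-x)=x\ln x$ has $x=\tfrac12$ as its sole solution on $(0,1)$, or by a second-derivative sign check), and the conclusion stands.

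As for comparison with the paper: the paper does not actually supply a proof. Its entire argument is the sentence ``Please refer to \cite{broder2004network} for the proof,'' deferring to the Broder--Mitzenmacher Bloom-filter survey. Your write-up is therefore strictly more informative than what appears in the paper, and it reproduces exactly the derivation one finds in that reference.
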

\begin{proof}
Please refer to \cite{broder2004network} for the proof.
\end{proof}

(2) \emph{How large tash table is necessary to represent all $m$ missing tags?}
%====== reserved for extension
%Since we totally have $n$ tags. Our representation must associate a tash table with each of the $\binom{n}{m}$ possible sets. Recall that any tash table $B-\widehat{B}$ is used to represent $m$ missing tags without \emph{false negatives}, but it many also accept $(n-m)\beta$ other elements of the entire tag sets while maintaining a false positive rate of at most $\beta$. In other words, $B-\widehat{B}$ accepts at most $m+(n-m)\beta$ tags. Simply, we only consider two cases for each entry (\ie either zero or non-zero), then we have $2^L$ distinct result tables that must represent all $\binom{n}{m}$ sets. Hence, we must have
%\begingroup\makeatletter\def\f@size{9}\check@mathfonts
%\begin{equation}
%	2^L\binom{m+(n-m)\beta}{m}\geq \binom{n}{m},
%\end{equation}
%\endgroup
%or
%\begingroup\makeatletter\def\f@size{9}\check@mathfonts
%\begin{equation}
%\label{eqn:missing-tash-size}
%	L\geq \log_2 \frac{\binom{n}{m}}{\binom{m+(n-m)\beta}{m}}\approx \log_2 \frac{\binom{n}{m}}{\binom{n\beta}{m}} \geq  m\log_2(1/\beta).
%\end{equation}
%\endgroup
%The approximation above is suitable when $m$ is small compared to $n\beta$. Recall that the optimal value for $k$ is $L\ln 2/m$. The above inequality gives a \emph{lower bound} about the size of tash table.
Recall that the false positive rate achieves minimum when $p=(1/2)^k$. Let $p\leq \gamma$. After some algebraic manipulation, we find
\begin{equation}\small
	L\geq \frac{m\log_2 (1/\gamma)}{\ln2} = m\log_2 e \cdot \log_2(1/\gamma) = 1.44m \log_2(1/\gamma)
\end{equation}
%Thus, the space-wise tash table are within a factor of $\log_2 e \approx 1.44$ of the lower bound.

\begin{figure}[t!]
  \centering
  \includegraphics[width=0.7\linewidth]{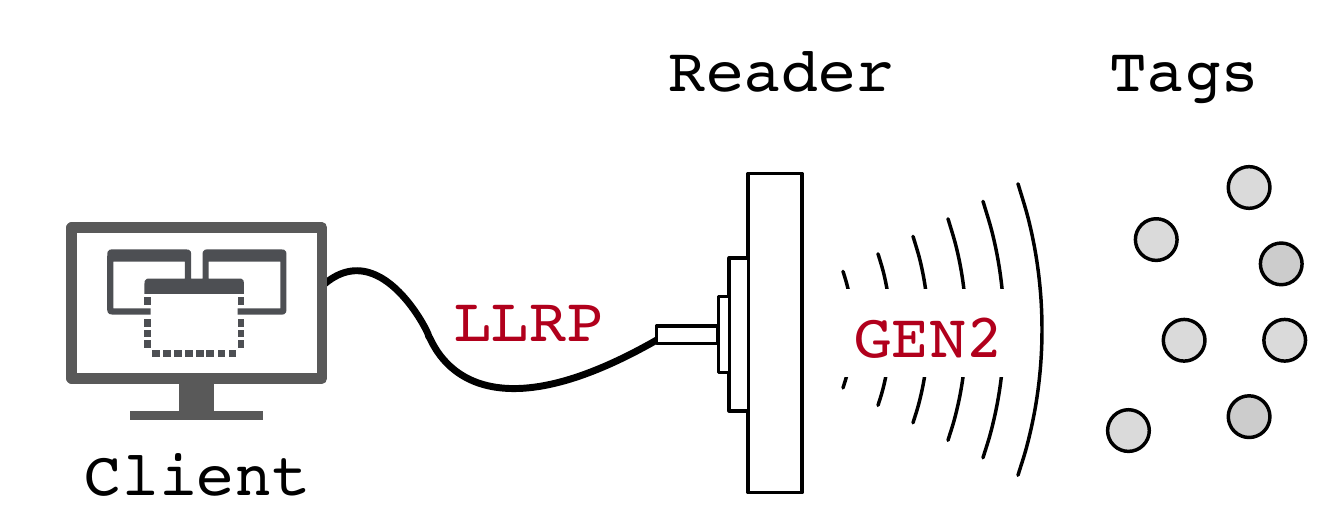}
   \caption{Gen2 vs. LLRP. \textnormal{Gen2 is the air protocol between a reader and tags while LLRP is the driver protocol between a client computer and a reader. Our framework leverages LLRP to manipulate a reader to broadcast  Gen2 commands that we need. }}
  \label{fig:llrp}
\end{figure}

\noindent Finally, putting the above conclusions together, we have the subsequent theorem.
\begin{thm}
\label{theorem:missing-detection}
Setting the table dimension to $\lceil \log_2(1.44m$ $\log_2$ $(1/\gamma)) \rceil$ and using $\lceil\ln 2 \cdot (2^l/m) \rceil$ random seeds allow the false positive rate of identifying $m$ missing tags lower than a given tolerance $\gamma$.
\end{thm}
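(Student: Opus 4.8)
The plan is to package the two design choices already derived in the analysis into a single statement, so the proof is essentially an assembly of (i) the closed-form false-positive expression, (ii) Lemma~\ref{lem:optimal-tash-function-number}, and (iii) the tolerance constraint $p\leq\gamma$. First I would recall the false-positive rate of the residual table viewed as a Bloom filter, namely $p=(1-e^{-km/L})^k$ with $L=2^l$, where $k$ is the number of random seeds and $m$ the number of missing tags. This is the quantity the theorem must drive below $\gamma$.

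Second, I would invoke Lemma~\ref{lem:optimal-tash-function-number} to fix the seed count at its minimizing value $k=\ln 2\cdot(L/m)$, at which the rate collapses to the clean form $p=(1/2)^k$. Imposing the requirement $p\leq\gamma$ then reads $(1/2)^k\leq\gamma$, i.e. $k\geq\log_2(1/\gamma)$. Substituting the optimal $k$ back and solving for the table size yields
\begin{equation*}
\ln 2\cdot\frac{L}{m}\geq\log_2\frac{1}{\gamma}
\quad\Longrightarrow\quad
L\geq\frac{m\log_2(1/\gamma)}{\ln 2}=1.44\,m\log_2(1/\gamma),
\end{equation*}
which is precisely the size bound established just before the theorem.

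Third, I would translate these continuous bounds into the integer parameters of the statement. Since $L=2^l$ must be a power of two no smaller than $1.44\,m\log_2(1/\gamma)$, the smallest admissible dimension is $l=\lceil\log_2(1.44\,m\log_2(1/\gamma))\rceil$; with this $l$ in hand, the optimal seed count becomes the integer $\lceil\ln 2\cdot(2^l/m)\rceil$, matching the theorem verbatim.

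The main obstacle I expect is the rounding, not the algebra. Because $p(k)=(1-e^{-km/L})^k$ is \emph{minimized} (rather than monotone) at the non-integer optimum, replacing the ideal $k$ by its ceiling moves the operating point off the minimum and can push the true false-positive rate above the idealized $(1/2)^k$. The careful step is therefore to argue that the slack created by rounding $l$ up --- which makes $L$ strictly exceed the continuous bound and hence drives the minimal rate strictly below $\gamma$ --- is enough to absorb the increase incurred by rounding $k$, so that the integer parameters actually used still guarantee $p\leq\gamma$. For the optimality of $k$ and the identity $p=(1/2)^k$ I would simply cite the Bloom-filter analysis of \cite{broder2004network}, exactly as the lemma already does.
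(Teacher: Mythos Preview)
Your proposal is correct and follows essentially the same route as the paper: the paper's ``proof'' is nothing more than the sentence ``putting the above conclusions together,'' assembling Lemma~\ref{lem:optimal-tash-function-number} with the inequality $L\geq 1.44\,m\log_2(1/\gamma)$ derived from $p=(1/2)^k\leq\gamma$. Your treatment is in fact more careful than the paper's, since you flag the rounding subtlety (the non-monotonicity of $p(k)$ near the optimum) that the paper simply ignores.
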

\noindent
%An extremely small $\gamma$ may introduce an unbelievable large $l$. In this situation,  we should consider the time cost consumed on per-tag collection in theory to make a better decision.
%, and collect the cost-effective one.

%\begin{figure}[t!]
%  \centering
%  \includegraphics[width=0.7\linewidth]{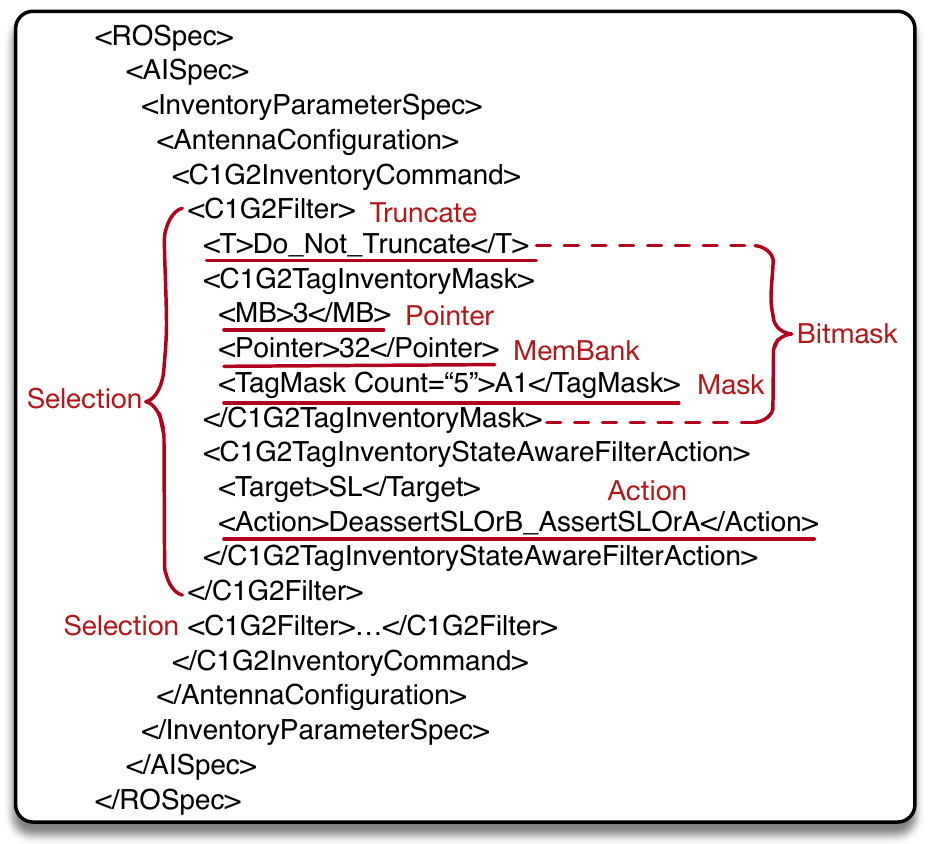}
%  \vspace{-0.3cm}
%   \caption{LLRP RO specification. \textnormal{The specification of the XML file defines various parameters that are required for selection commands.}}
%  \label{fig:llrp-config}
%\end{figure}
%
%
%\begin{figure}[t!]
%  \centering
%  \includegraphics[width=0.7\linewidth]{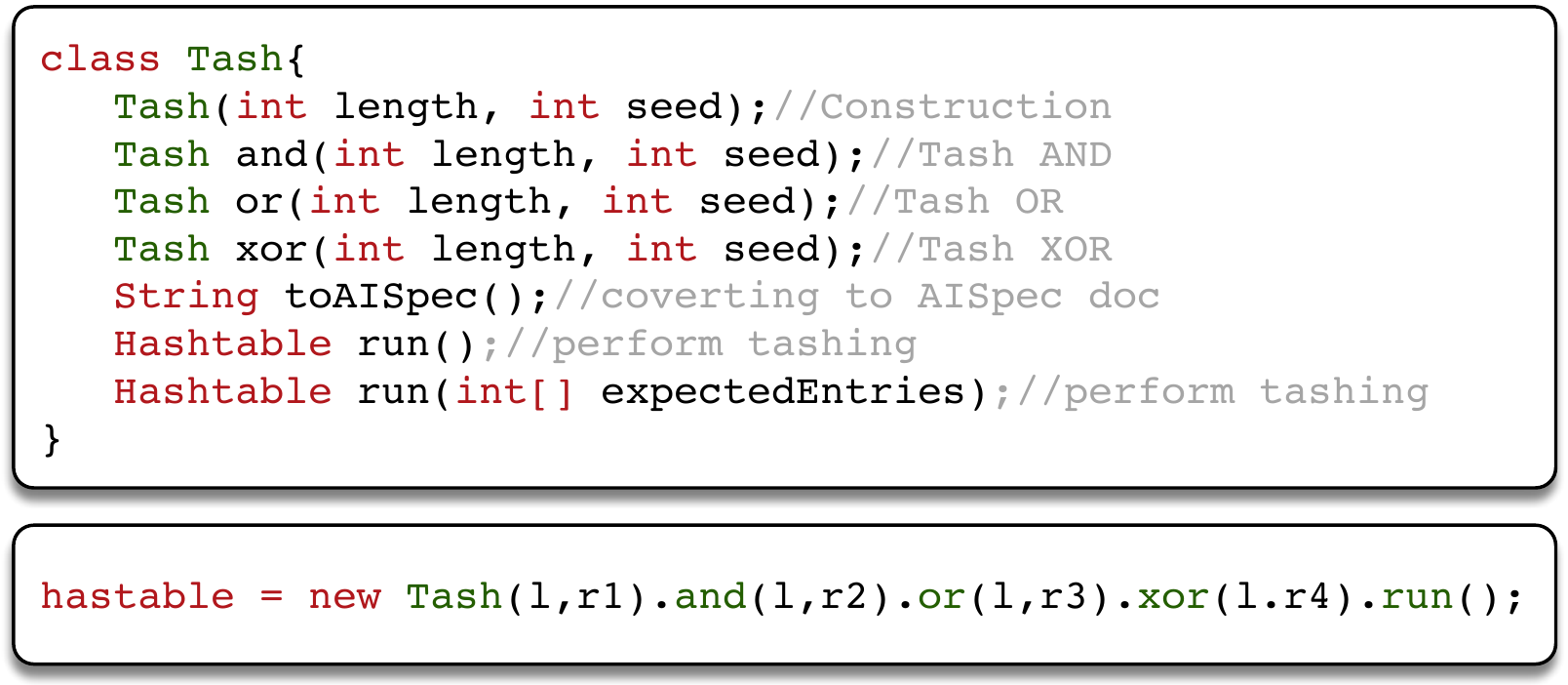}
%   \caption{Tash framework interface. \textnormal{The primary interfaces provides by tash framework, which is developed by using Java language and LLRP Toolkit.}}
%  \label{fig:tash-interface}
%\end{figure}

\begin{figure}[t!]
  \centering
  \subfigure[LLRP RO specification. ]{
  	\label{fig:llrp-config}
  	\includegraphics[width=0.7\linewidth]{llrp-config}
  }
  \subfigure[Tash framework interface. ]{
  	\label{fig:tash-interface}
  	\includegraphics[width=0.7\linewidth]{interface}
  }
    \caption{Tash implementation. \textnormal{(a) The specification of the XML file defines various parameters that are required for selection commands. (b) The primary interfaces provides by tash framework, which is developed by using Java language and LLRP Toolkit.}}
      \vspace{-0.3cm}
\end{figure}

\section{Tash Implementation}
\label{section:implementation}

Our implementation involves two kinds of protocols: UHF Gen2 air interface protocol (Gen2) and Low Level Reader Protocol (LLRP). \underline{As shown in Fig.~\ref{fig:llrp}},  Gen2 protocol defines the physical and logical interaction between readers and passive tags, while LLRP allows a client computer to control a reader.  Each client computer connects one ore more RFID readers via Ethernet cables.  LLRP is the \emph{driver program (or driver protocol)} for  Gen2 readers. We leverage LLRP to manipulate a reader to broadcast Gen2 commands that we need.  Notice that we do not need particularly implement Gen2 protocol, which has been implemented in the COTS RFID devices that we are using.  Specifically, LLRP specifies two types of operations: reader operation (RO) and access operation (AO). Both operations are represented in  XML document form and transported to a reader through TCP/IP.

\textbf{Reader operation.} RO defines the inventory parameters specified in the Gen2 protocol, such as bitmask, antenna power, and frequency. Fig.~\ref{fig:llrp-config} shows a simplified instance of an \texttt{ROSpec}.  An \texttt{ROSpec} is composed of at least one \texttt{AISpec}. Each \texttt{AISpec} is used for an antenna setting. An \texttt{AISpec} consists of more than one \texttt{C1G2Filter}s. The filter functions as a bitmask. We can set multiple selection commands by adding multiple \texttt{C1G2Filter}s.

\textbf{Access operation.} AO defines the access parameters for writing or reading data  to and from a tag.  We leverage the \texttt{C1G2Write}  inside an \texttt{AOSpec} to write the hash value of the \texttt{EPC} into a user-defined memory bank. As the \texttt{EPC}s are highly related to the products the tags attached, the writing of hash values should be accomplished by the product manufactures or administrators.  There is almost no overhead to write data into \texttt{MemBank-3} since it is allowed to write a batch of tags simultaneously using \texttt{Write} commands specified in one \texttt{AOSpec}, without physically changing tags' positions.

\textbf{Tash framework}. Our framework is developed by using Java language and the LLRP Toolkit\cite{llrp-toolkit}, which is an open-source library for handling \texttt{ROSpec} and \texttt{AOSpec}.  Fig.~\ref{fig:tash-interface} shows the primary interfaces provided by the tash framework. The class \texttt{Tash} makes the first selection through its construction method and allows the calls of three operators to be chained together in a single statement.
The method \texttt{toAISpec} converts a \texttt{Tash} object or  a chain of  \texttt{Tash} objects into an \texttt{AISpec}. The entry-inventories are physically executed in the connected reader when the method \texttt{run} is invoked.  This method allows users to make selective entry-inventories by passing an index array. For example, the operation $\mathcal{F}_l(T, r_1) \oplus \mathcal{F}_l(T, r_2)  || $ $\mathcal{F}_l(T,r_3) \otimes \mathcal{F}_l(T,r_4)$ can be coded in a manner similar to that shown at the bottom of Fig.~\ref{fig:tash-interface}.

\begin{table*}[!t]
  \centering
  \footnotesize
  \caption{Summary of Gen2-compatibility on tag.}
  \label{tab:tag-compatibility}
  \vspace{-0.3cm}
  \begin{tabular}{|l|c|c|c|c|c|c|c|c|c|c|c|c|c|c|c|c|c|c|}
   \hline
   &  \multicolumn{9}{|c|}{\textbf{ImpinJ Monza}} & \multicolumn{9}{|c|}{\textbf{Alien ALN}} \\
    \hline
    Commands & 5 & D & E & QT & X-2K& X-8K &R6 & R6-P & R6-C &
    9840 & 9830 &9662 &9610&9726&9820&9715&9716&9629 \\
     \hline
     \texttt{MemBank1} (bits) &128 & 128 &496 & 128 & 128& 128 &96 & 128/96 & 96 &
    128 & 128 & 480 &96-480&128&128&128&128&96 \\
    \hline
     \texttt{MemBank3} (bits) &32 & 32 &128 & 512 & 2176& 8192 &\texttimes & 32/64 & 32 &
    128 & 128 & 512 &512&128&128&128&128&512 \\
    \hline
 \texttt{Write} cmd& \checkmark  & \checkmark  & \checkmark  & \checkmark  & \checkmark & \checkmark  &\texttimes  & \checkmark  & \checkmark  &
    \checkmark  & \checkmark  &\checkmark  &\checkmark &\checkmark &\checkmark &\checkmark &\checkmark &\checkmark  \\
 \hline
    \texttt{Select} cmd & \checkmark   &\checkmark   & \checkmark   & \checkmark   & \checkmark  & \checkmark  &\checkmark   & \checkmark  & \checkmark   &
    \checkmark   & \checkmark   &\checkmark   &\checkmark  &\checkmark  &\checkmark  &\checkmark  &\checkmark  &\checkmark   \\
    \hline
    \texttt{Truncate} cmd&  --   & --  &  --  &  --  &  -- &  -- & --  &  -- &  --  &
     --  &  --  & --  & -- & -- & -- & -- &--  & -- \\
       \hline
  \end{tabular}
\end{table*}

\section{Microbenchmark}
\label{section:microbenchmark}

We start with a few experiments that provide insight to our hash primitives.

\subsection{Experimental Setup}
\begin{figure}[t!]
  \centering
  \includegraphics[width=0.6\linewidth]{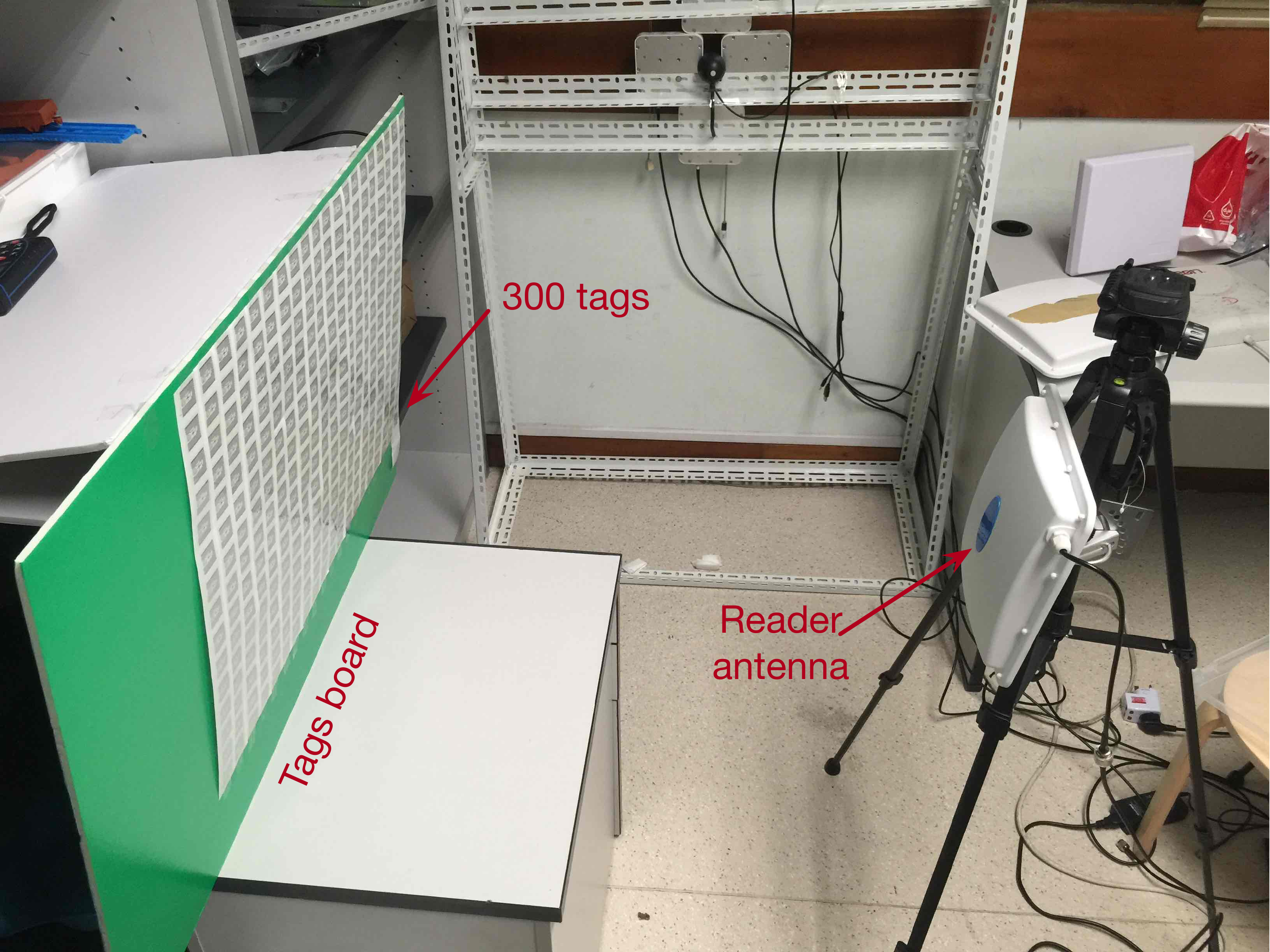}
   \caption{Testbed in our laboratory. \textnormal{Total $300$ tags are attached on a board and covered by a directional reader antenna.}}
  \label{fig:lab-scene}
  \vspace{-0.3cm}
\end{figure}

We evaluate the framework using COTS UHF readers and tags. We use a total of $3$ models of ImpinJ readers (R220, R420 and R680), each of which is connected to a $900$MHz and $8$dB gain directional antenna.  In order to better understand the feasibility and effectiveness of \oursystem in practice, we test a total of $3,000$ COTS tags with different models. We divide these tags into $10$ groups of $300$ tags each. The tags of each group are densely attached to a plastic board which is placed in front of a reader antenna. As shown in Fig.~\ref{fig:lab-scene}, three hundreds is the maximum number of tags that can be covered by one directional antenna in our laboratory. We store the $3,000$ \texttt{EPC} numbers in our database as the ground truth.  The 128-bit MD5 is employed as the common hash function to generate the hash values of \texttt{EPC}s. The experiments with the same settings are repeated across the $10$ groups, and the average result is reported.

\subsection{Compatibility Investigation}

First, we investigate the compatibility of Gen2 across $10$ different  types of readers and $18$ different types of  tags in terms of  the functions or commands that \oursystem requires. The readers and tags may come from different manufacturers but work together in practice. These investigated products  are all publicly claimed to be completely Gen2-compatible.

\textbf{Reader compatibility.} We investigate the R220, R420, and R680 models from ImpinJ\cite{impinj}, the Mercury6, Sargas  and M6e models from ThingMagic\cite{thingmagic}, as well as the ALR-F800, 9900+, 9680 and 9650 models from Alien\cite{alien}. We perform the investigation through real tests for the first three models of readers (\ie the ImpinJ series), and investigate the other readers through their data sheets or manuals (because we are limited by the lack of hardware). The Gen2-compatibility of readers is briefly summarized in Table.~\ref{tab:reader-compatibility}. Consequently, we have the subsequent findings. ($1$) All the readers do support \texttt{Write/Read} command, which Tash uses for writing or reading hash values of \texttt{EPC} numbers. ($2$) All the readers do support the \texttt{Select} command, which Tash uses for the selective reading. ($3$) However, our practical tests suggest that none model of the ImpinJ series supports the  \texttt{Truncate} command, which Tash uses to hear the one-bit presence signal. The serviceability  of other readers is not clearly indicated in the manuals of those readers.  ($4$) The Gen2 protocol does not specify how many \texttt{C1G2Filter}s and \texttt{AISpec}s  that a reader should support. Our practical tests suggest that the ImpinJ series supports $4$ \texttt{C1G2Filter}s and $16$ \texttt{AISpec}s, which means that we can only use a maximum of four tash operators each time.

\begin{table}[!b]
  \centering
  \small
  \caption{Summary of Gen2-compatibility on reader}
    \vspace{-0.3cm}
  \label{tab:reader-compatibility}
  \begin{tabular}{|l|c|c|c|}
    \hline
    \textbf{Commands or functions} & \textbf{ImpinJ} & \textbf{ThingMagic} & \textbf{Alien} \\
    \hline
    \texttt{Write}/\texttt{Read}& \checkmark & \checkmark & \checkmark\\
    \hline
    \texttt{Select} & \checkmark & \checkmark & \checkmark \\
    \hline
    \texttt{Truncate} & \texttimes & -- & --\\
    \hline
    Max No. of \texttt{C1G2Filter}s & $4$ & -- & --\\
    \hline
    Max No. of \texttt{AISpec}s & $16$ & -- & -- \\
    \hline
  \end{tabular}
\end{table}

\textbf{Tag compatibility}.  We investigate $9$ chip models from ImpinJ Monza series and $9$ additional models from Alien ALN series. The majority of tags on the market contain these $18$ models of chips and customized antennas. Table.~\ref{tab:tag-compatibility} summarizes the result of our investigation, from which we have the subsequent findings. (1) Tags reserve $96\sim 480$ bits of memory for storing \texttt{EPC} numbers, among which the size of $96$ bits has become the de facto standard.  (2) \oursystem requires \texttt{MemBank-3} to store the hash values. The results of the investigation show that almost all tags allow to write to and read from the third memory bank, with an exception of ImpinJ Monza R6, which does not have the user-defined memory. The size of the third memory bank fluctuates around $32\sim 512$ bits. The de facto standard has become $128$ bits. (4) All tags are claimed to support the \texttt{Truncate} command according to their public data sheets. However, we have no idea about their real serviceability   due to the lack of \texttt{Truncate}-supportable reader available for practical tests. In our future work, we plan to utilize USRP for further tests.

\textbf{Summary.} Despite positive and public claims, our investigation shows that current COTS RFID devices, regardless of readers or tags and models, have some defects in their compatibility with Gen2, especially with regard to  \texttt{Truncate}.  The reason, we may infer, is that these commands are seldom used in practice and therefore never receive attention from manufacturers. The partial compatibility of such devices cannot fully achieve the performance \oursystem brings. Even so, we are obliged to make the claim, again, that our design strictly follows the Gen2 protocol. We hope this work can encourage manufacturers to upgrade their products (\eg reader firmware) to achieve full compatibility.

%\begin{figure*}[!t]
%  \centering
%    \begin{minipage}{0.33\linewidth}
%    \centering
%    \includegraphics[width=\linewidth,height=3.6cm]{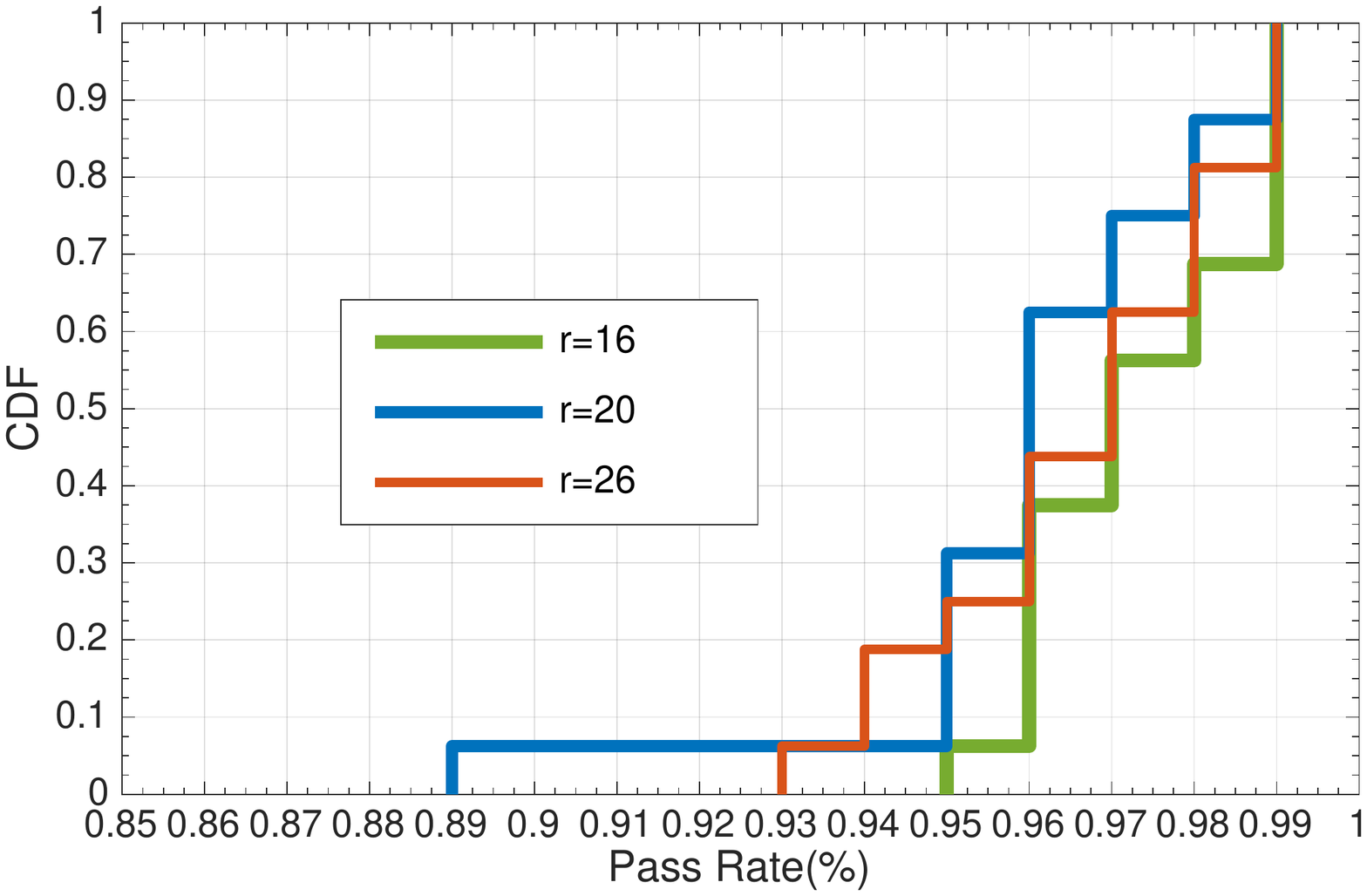}
%    \vspace{-0.5cm}
%    \caption{Randomness of tash}
%    \label{fig:random-test-cdf}
%  \end{minipage}
%      \begin{minipage}{0.33\linewidth}
%    \centering
%    \includegraphics[width=\linewidth,height=3.6cm]{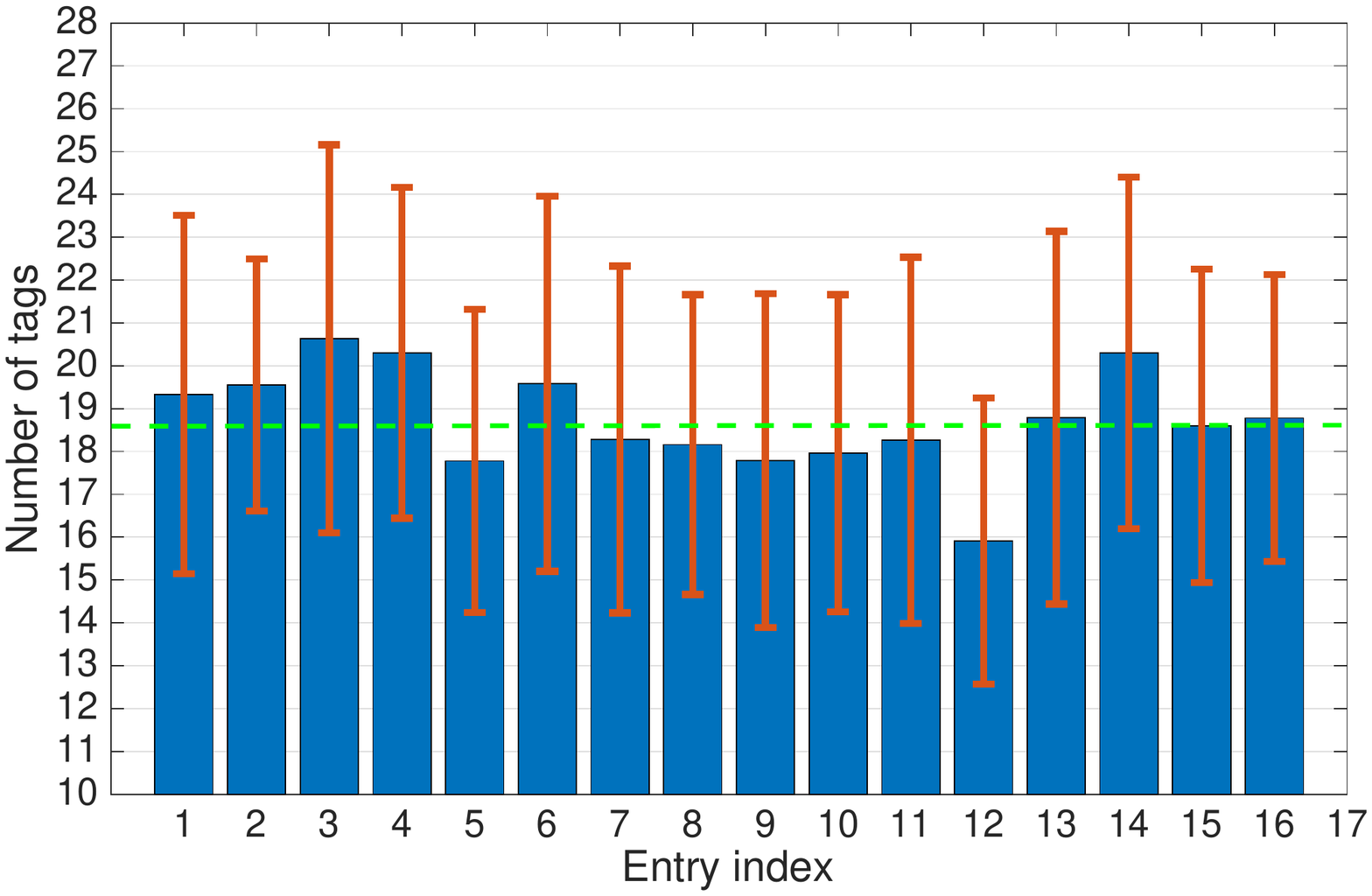}
%        \vspace{-0.5cm}
%    \caption{Balance of tash table}
%    \label{fig:hash-table-balance}
%  \end{minipage}
%    \begin{minipage}{0.33\linewidth}
%    \centering
%    \includegraphics[width=\linewidth,height=3.6cm]{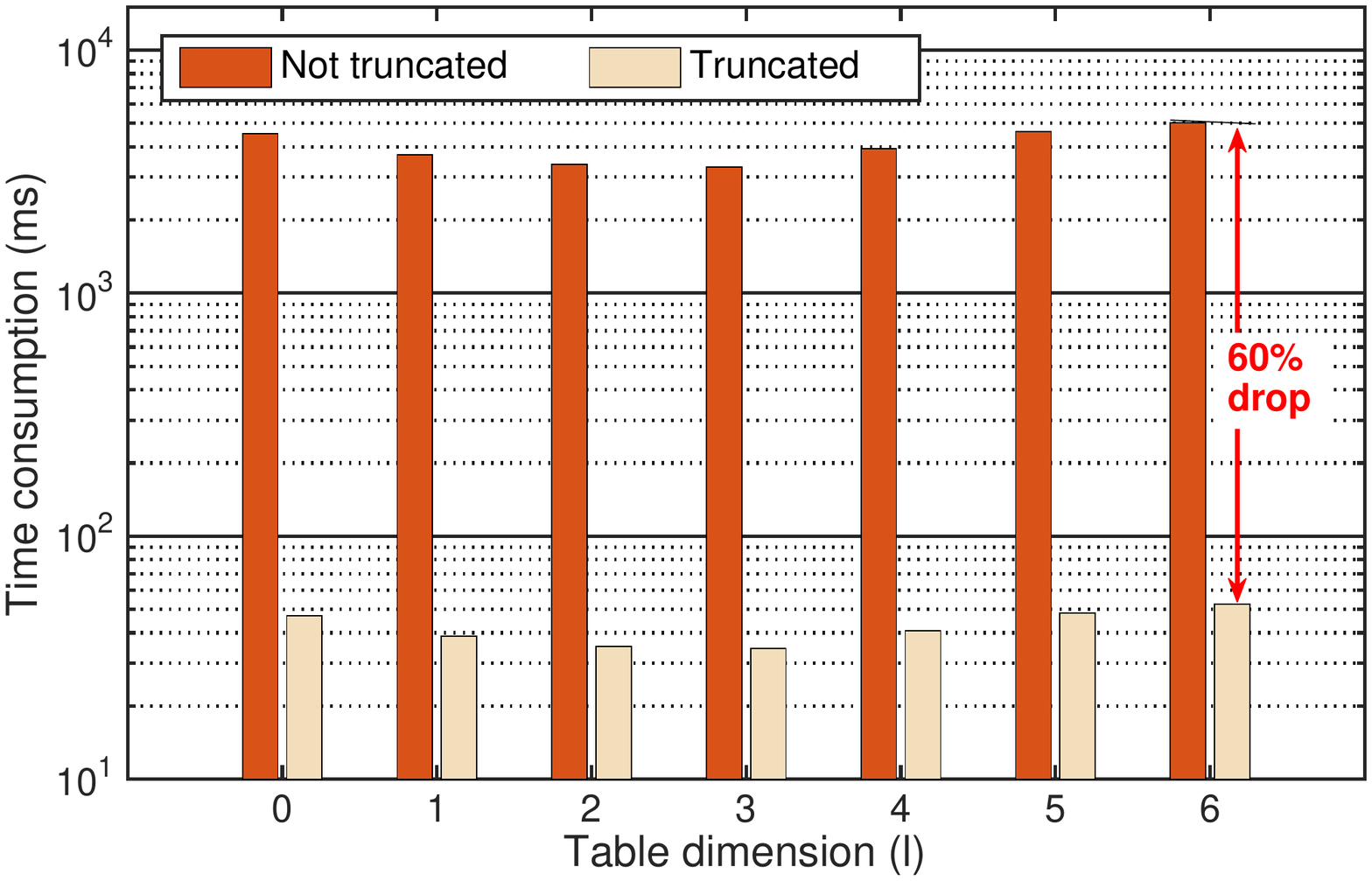}
%        \vspace{-0.5cm}
%    \caption{Speed of tash table function}
%    \label{fig:tash-table-speed}
%  \end{minipage}
%\end{figure*}
%
%\begin{figure*}[!t]
%  \centering
%    \begin{minipage}{0.33\linewidth}
%    \centering
%    \includegraphics[width=\linewidth,height=3.6cm]{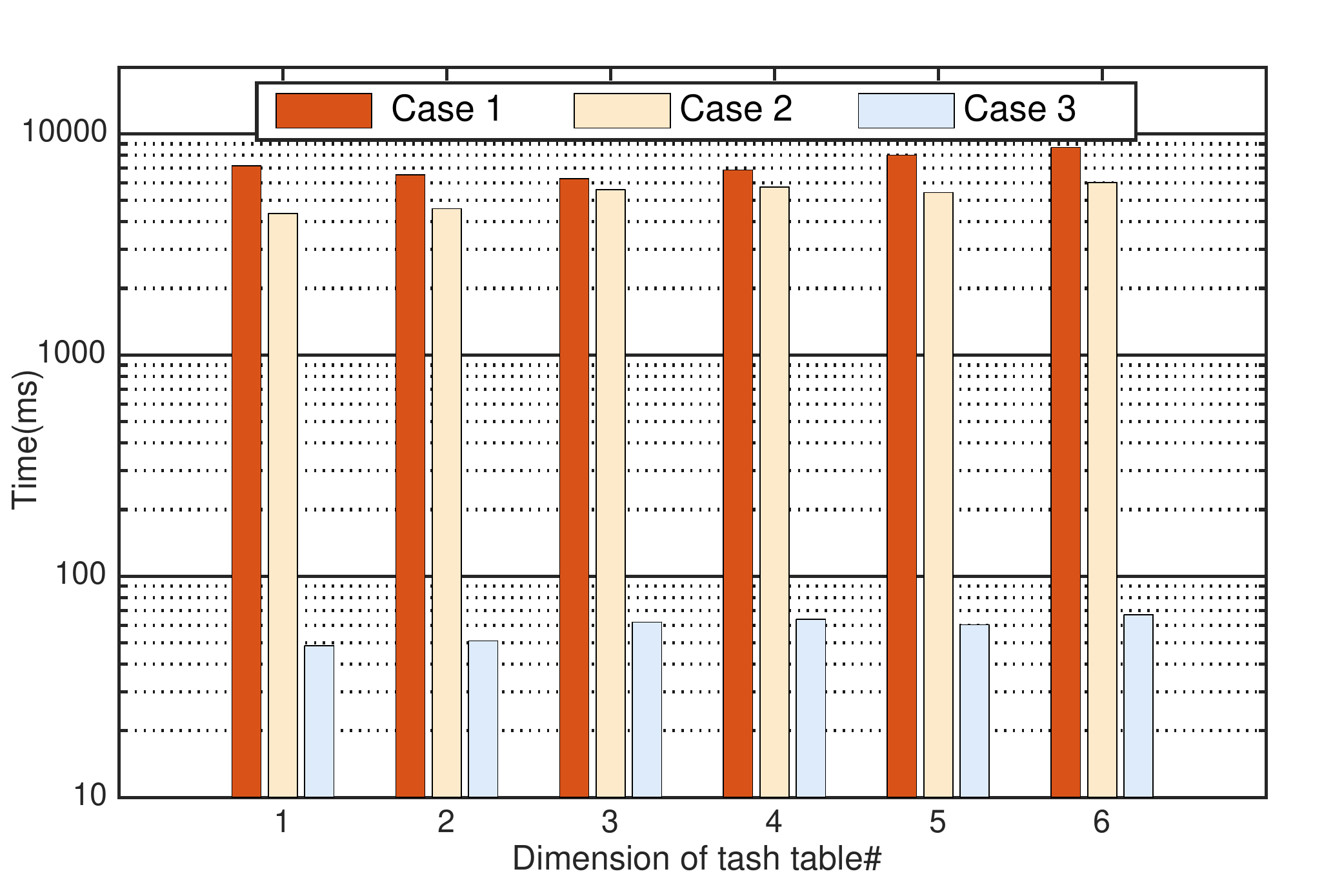}
%    \vspace{-0.5cm}
%    \caption{Performance of tash OR}
%    \label{fig:hash-operator-or}
%    \vspace{-0.2cm}
%  \end{minipage}
%      \begin{minipage}{0.33\linewidth}
%    \centering
%    \includegraphics[width=\linewidth,height=3.6cm]{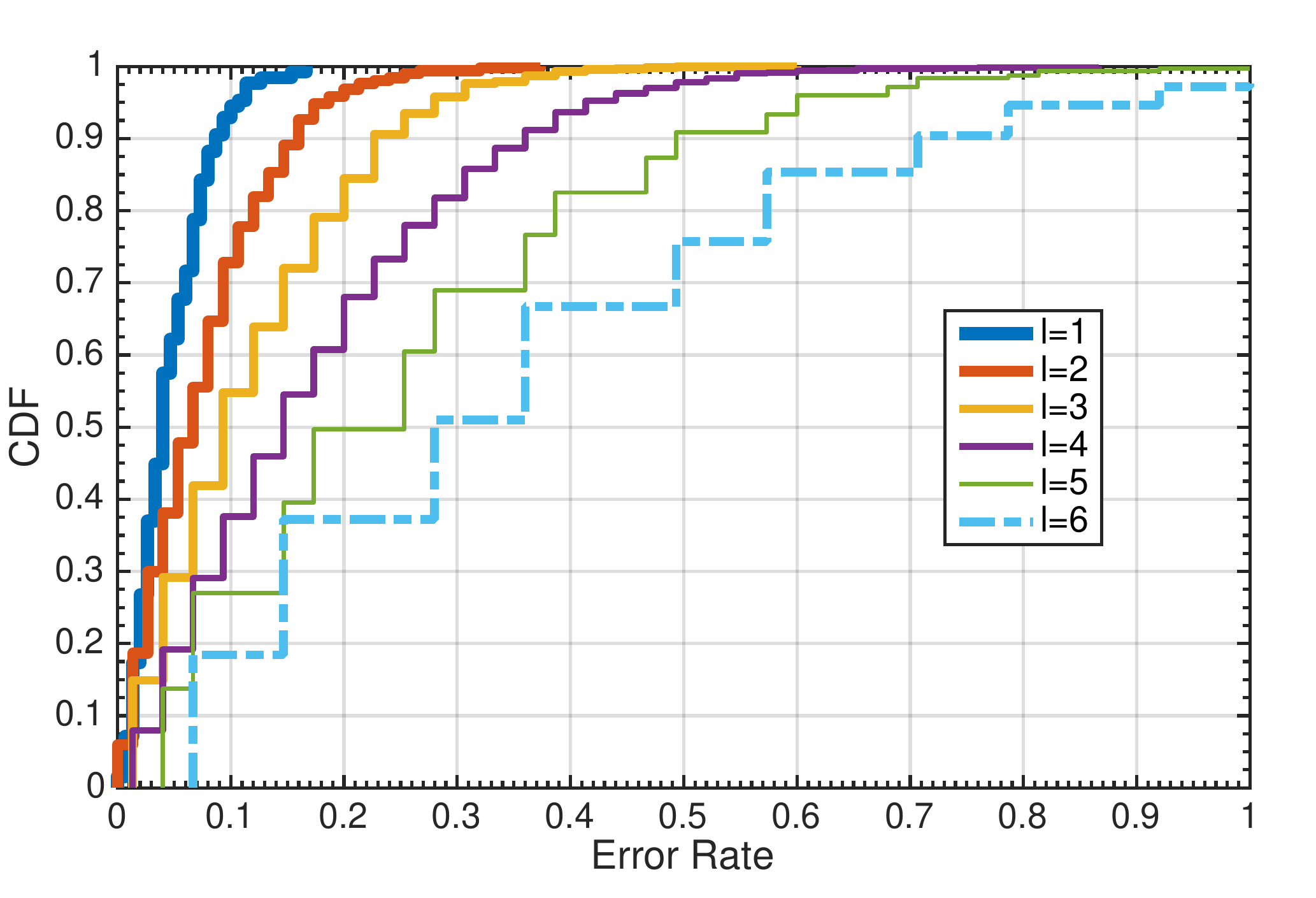}
%    \vspace{-0.5cm}
%    \caption{Estimation via testbed}
%    \label{fig:usage-estimation-real}
%    \vspace{-0.2cm}
%  \end{minipage}
%    \begin{minipage}{0.33\linewidth}
%    \centering
%    \includegraphics[width=\linewidth,height=3.6cm]{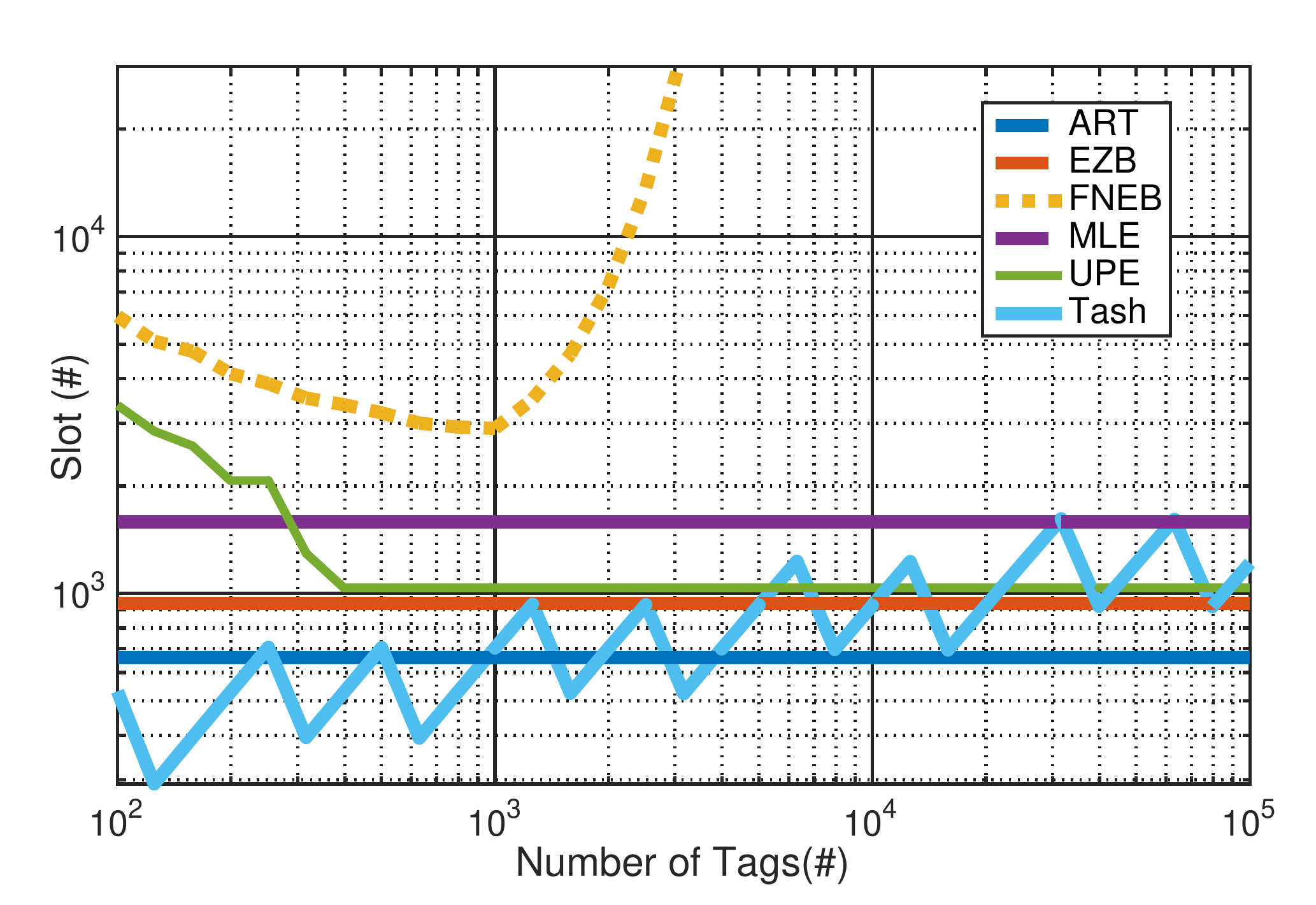}
%    \vspace{-0.5cm}
%    \caption{Estimation via simulation}
%    \label{fig:usage-estimation-time}
%    \vspace{-0.2cm}
%  \end{minipage}
%
%\end{figure*}

\subsection{Tash Function}

\begin{figure}[t!]
  \centering
  \subfigure[Distributions of percents of `0' and `1']{
  	\label{fig:random-percent}
  	\includegraphics[width=0.8\linewidth]{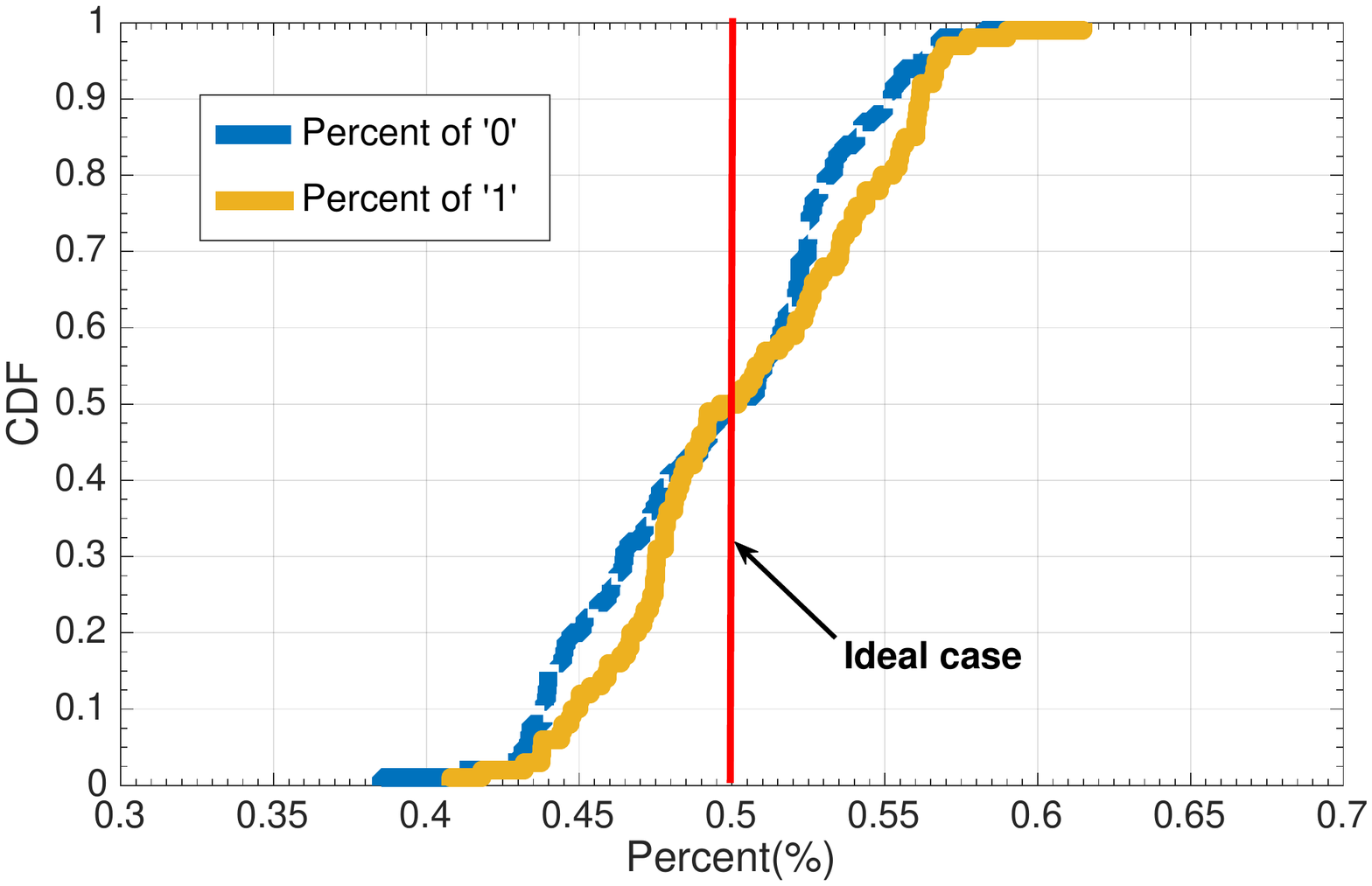}
  }
  \subfigure[Results of random test]{
  	\label{fig:random-test}
  	\includegraphics[width=0.8\linewidth]{tash-randomness}
  }
  \vspace{-0.2cm}
    \caption{Evaluation of tash function. \textnormal{(a) shows the CDF of percents of `0' and `1' appearing in the tash values. (b) shows the CDF of pass rates of random ness tests.}}
    \vspace{-0.3cm}
\end{figure}

\begin{figure}[t!]
  \centering
  \subfigure[Balance]{
  	\label{fig:hash-table-balance}
  	\includegraphics[width=0.8\linewidth]{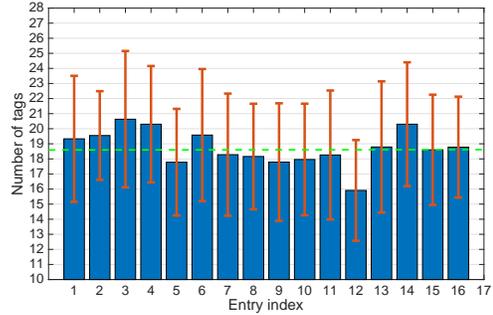}
  }
  \subfigure[Gathering time]{
  	\label{fig:tash-table-speed}
  	\includegraphics[width=0.8\linewidth]{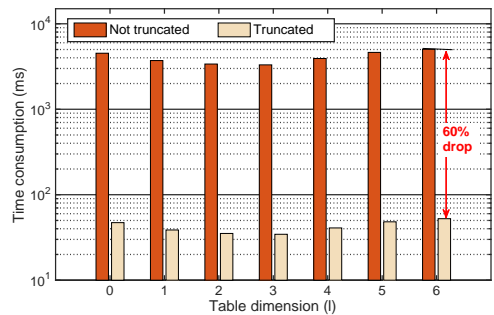}
  }
  \vspace{-0.2cm}
    \caption{Evaluation of tash table function. \textnormal{(a) shows the balance of a $4$-bit tash table across $300$ tags using $100$ different random seed. (b) shows the time consumption on gathering $6$ hash tables with different dimensions.}}
    \vspace{-0.3cm}
\end{figure}

Second, we evaluate the tash function with respect to the randomness and the accessibility.

\textbf{Randomness}. Randomness is the most important metric for a hash function. It requires that the outputs of a hash function must be uniformly distributed.  To validate the randomness of the tash function, we collect $99,886$ real \texttt{EPC} numbers from our partner (\ie an international logistics company), which introduced RFID technology for sorting tasks five years ago. Each \texttt{EPC} number has a length of $96$ bits and encodes the basic information about the package, such as sources, destinations, serial numbers, and so on. We employ the $128$-bit MD5 to create the hash values of these \texttt{EPC}s. As the minimum size of the \texttt{MemBank-3} is $32$ bits (see Table.~\ref{tab:reader-compatibility}), we choose to use only the first $32$ bits for our tests. We traverse $r$ and $l$ from $0\sim31$ and $1\sim 32-r$ respectively. For each pair of $r$ and $l$, we obtain $99,886$ tash values over all the \texttt{EPC}s. Across these tash values, we further conduct the following two analysis: (1) We merge $100$ tash values, which are randomly selected from the above results, into a long bit string. We then calculate the percents of `0' and `1' emerged in that bit string. This operation is repeated for $100$ times. Finally, totally $100$ pairs of percents are obtained. Their CDFs are plotted in  Fig.~\ref{fig:random-percent}. Ideally, each bit has a equal probability of $0.5$ to be zero or one if a hash function makes a good randomicity. From the figure, we can figure out that the percents distributed between $0.4$ and $0.6$. In particular,  percents of `0' and `1' have means of $0.49$ and $0.50$ with standard deviations of $0.043$ and $0.044$ respectively.  (2)  We shuffle these values into $100$ groups, and employ the $\chi^2$-test with a significance level of $0.05$ to test each group's goodness-of-fits of the uniform distribution (\ie passed or failed). Then, we finally calculate the pass rate for a pair of setting. In this manner, we totally obtain $496$ pass rates. More than $60\%$ of the pass rates are over than $0.95$. In particular, three sets of the results with $r=16$, $20$ and $26$ and a variable $l$, are selected to show in Fig.~\ref{fig:random-test}. We find that $90\%$ of the pass rates exceed $0.95$ for the three cases, and their median pass rates are around $0.97$. Thus, the two above statistical results suggest that our tash function has a very good quality of randomness.

%\begin{figure}[t!]
%  \centering
%  \includegraphics[width=0.7\linewidth]{tash-randomness}
%   \caption{Evaluation of tash function. \textnormal{dd}}
%  \ \label{fig:random-test-cdf}
%\end{figure}

\textbf{Accessibility}. Accessibility  refers to the ability to get access to a tash value from a tag. As aforementioned, we have two ways to acquire the tash values. The first way is to use the \texttt{Read} command. The second way is to indirectly access a tash value through a selective reading. We choose the second method since it is the basis of our design. Specifically, we perform a selective reading to determine whether the tags are collected as expected, when given random inputs and a possible tash value. We intensively and continuously perform such readings across the $10\times 300$ tags using three 4-port ImpinJ readers for three rounds of $24$ hours in a relatively isolated environment (\eg an empty room without disturbance). Surprisingly, we find all the reading results faithfully conform to our benchmarks without any exceptions. This shows that the selective reading is well supported by the manufactures and is both stable and reliable.
%Meanwhile, it also shows that obtaining tash values through the selective readings is completely feasible.

\subsection{Tash Table Function}

Third, we evaluate the performance of the tash table function in terms of its balance and gathering speed.

\textbf{Balance}. A good hash table function will equally assign each key to a bucket. We expect the output tash table to be as balanced as possible.  To show this feature, we generate $100$ different $4$-bit tash tables (\ie each includes $16$ entries) across $300$ tags using $100$ different random seeds. If the tash table is well balanced, the expected number of each entry should be very close to $300/16=18.7$. Fig.~\ref{fig:hash-table-balance} shows the mean number of tags in each entry as well as their standard deviations. The average number across $16$ entries equals $18.75$, which is very close to the expected  theoretical value. The average standard deviation equals $0.44$. Thus, the good randomness quality of tash functions results in output tash tables being well balanced.

\begin{figure}[t!]
  \centering
  \includegraphics[width=0.8\linewidth]{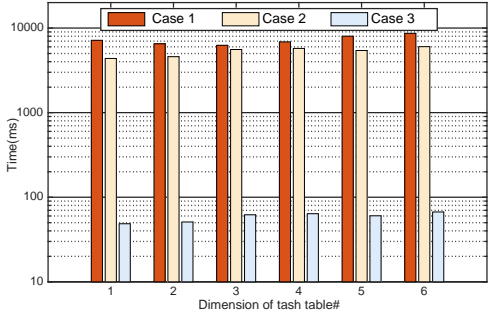}
  \vspace{-0.2cm}
    \caption{Performance of tash OR. \textnormal{In Case 1, two tash tables are conducted OR operation on application layer without truncating replies.  In Case 2 and Case 3  two tables are conducted Tash OR on tags without/with  truncating replies.}}
    \label{fig:hash-operator-or}
        \vspace{-0.3cm}
\end{figure}

\textbf{Gathering speed}. We then consider the time consumption of gathering a tash table. Fixing the random seed, we vary the table dimension $l$ from $0$ to $6$. We then measure the  time taken on gathering a tash table with the deployed $300$ tags.  Fig.~\ref{fig:tash-table-speed} shows the resulting time as a function of the table dimension. From the immediately above-mentioned  figure,  we can observe the subsequent findings.  ($1$)  When $l=0$ without truncating reply, the result is equivalent to collecting $300$ complete \texttt{EPC}s of all the tags. Such time consumption (\ie $4,524ms$) is viewed as our baseline.  ($2$) By contrast, when $l>0$ without truncating a reply, the collection amounts to dividing all the tags into $2^l$ groups ``equally'' and then collecting each group independently.  In this manner, when $l\leq 4$, such ``divide and conquer'' approach is better than ``one time deal'', \ie a drop in overhead of about $10\%$. The Gen2 reader uses a Q-adaptive algorithm for the anti-collision. This algorithm is able to adaptively learn the best frame length from the collision history. Due to the division,  a smaller number of tags can make reader's learning relatively quicker and improve the overall performance.  ($3$) However, when $l>4$, the performance of ``divide and conquer'' approach starts to deteriorate. The ImpinJ reader  supports $16$ \texttt{AISpec}s at most (see Table.~\ref{tab:reader-compatibility}). We have to re-send another \texttt{ROSpec} for the remaining selective readings when the number of entry-inventory is above $16$ (\ie $l>4$), which introduces additional time consumption. ($4$) We then consider the case where the reply is truncated to a one-bit presence signal as assumed by HEPs. Due to the defects of ImpinJ readers in the implementation of the \texttt{Truncate} command, we cannot measure the actual time spent on collecting truncated \texttt{EPC}s. We can only utilize the least-square algorithm to estimate the transmission  time for a one-bit presence signal. Our fitting results show that truncating reply would introduce about $60\%$ drop of the overhead at least.

\subsection{Tash Operators}

\begin{figure}[t!]
  \centering
  \subfigure[Testbed]{
  	\label{fig:usage-estimation-real}
  	\includegraphics[width=0.8\linewidth]{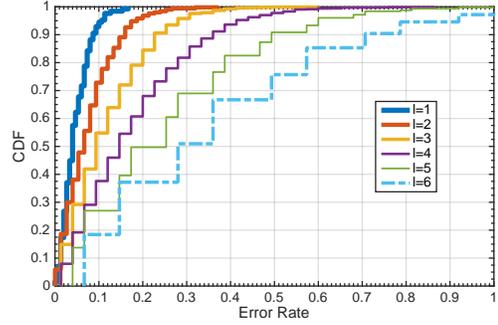}
  }
  \subfigure[Large-scale simulation]{
  	\label{fig:usage-estimation-time}
  	\includegraphics[width=0.8\linewidth]{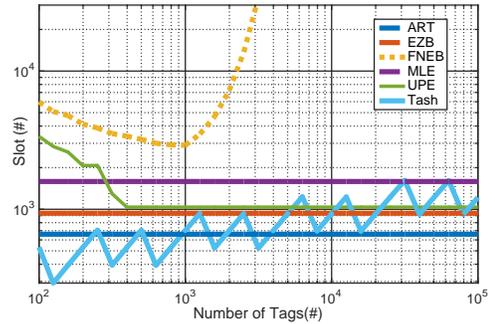}
  }
  \vspace{-0.2cm}
    \caption{Cardinality estimation. \textnormal{(a) shows the CDF of error rates for estimating $300$ tags with our testbed. (b) shows the estimation comparisons with other theoretical algorithms with simulation.}}
    \vspace{-0.3cm}
\end{figure}

Finally, we investigate the performance of tash operators.  Superior to existing HEPs, these operators allow us to perform set operations on-tag and conduct a one-stop inventory. In particular, we show the performance of OR as a representative across $300$ tags. The tests for other operators are similar and omitted due to the space limitation. In the experiments, we fix the two random seeds but change the dimension of tash table. Fig.~\ref{fig:hash-operator-or} shows the results of three cases. In Case $1$, we independently produce $2$ tash tables without truncating a reply and conduct the OR in the application layer.  In Case $2$ and Case $3$, we conduct on-tag OR function as \oursystem provides without and with truncating a reply respectively. Consequently, when the dimension equals $2$, Case 1 takes $6,511ms$ on collecting two tables. On the contrary,  the amount of time taken is reduced to $4,578ms$ (\ie $29.7\%$ drop) if we perform an on-tag OR function even without truncation (Case 2).  Ideally, the amount of time taken could be further reduced to $50.97ms$ by using a truncating reply (Case 3), which offers a staggering drop in time usage by $99.22\%$. Our experiments relate only to the amount of time spent on ORing two tables. It may be predicted that  much more outperformance will be gained if multiple tables are involved. The tash operators that we design in this work have never been proposed before.

\section{Usage Evaluation}
\label{section:usage-evaluation}

We then use our prototype to demonstrate the benefits and potentials of \oursystem in two typical applications.

\subsection{Usage I: Cardinality Estimation}

We evaluate our estimation scheme through the testbed as well as large-scale simulations.

\begin{figure}[t!]
  \centering
  \subfigure[$k=2$ and $l=8$]{
  	\label{fig:missing-detection-eval-1}
  	\includegraphics[width=0.8\linewidth]{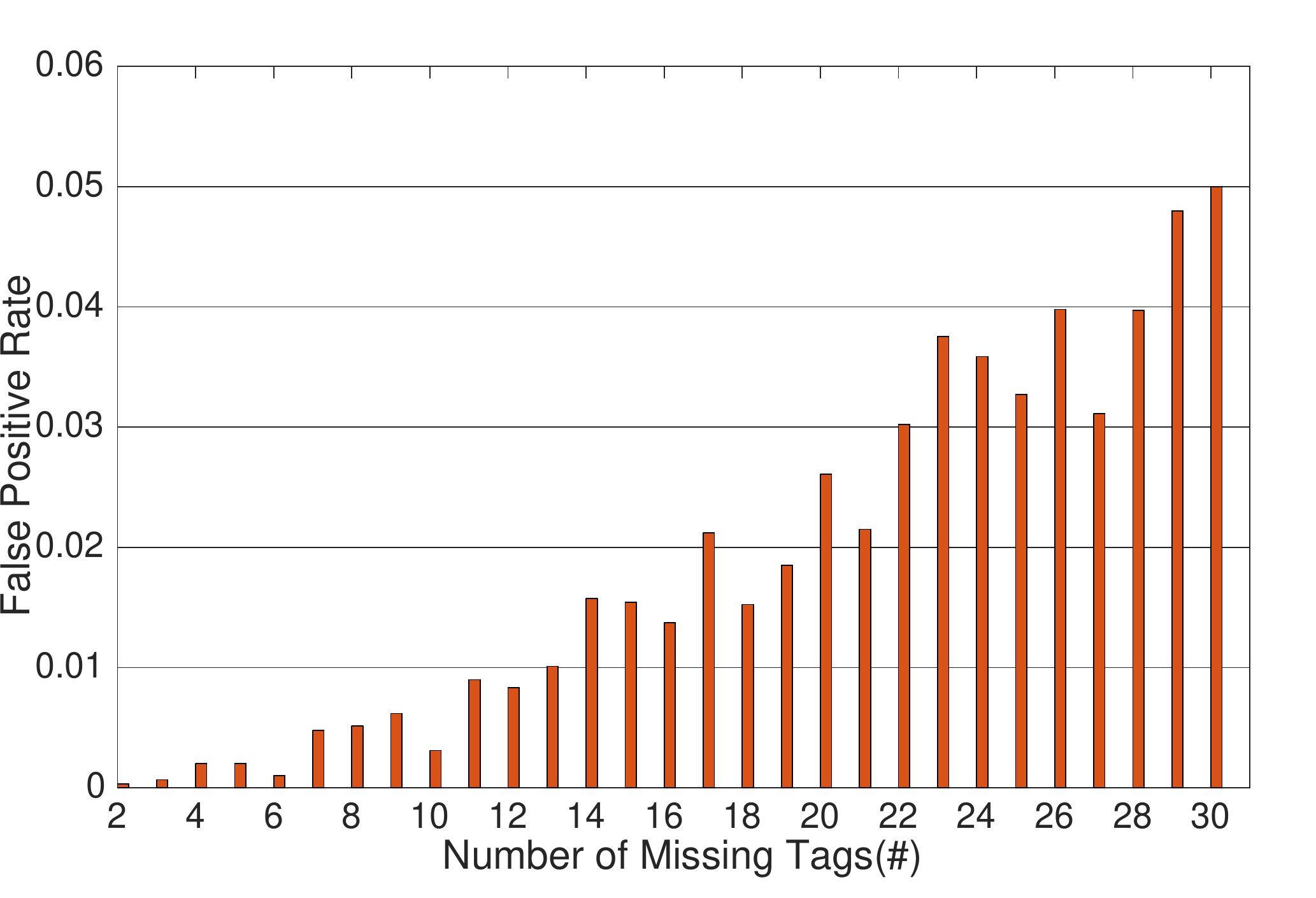}
  }
  \subfigure[$k=2$ and $m=10$]{
  	\label{fig:missing-detection-eval-2}
  	\includegraphics[width=0.8\linewidth]{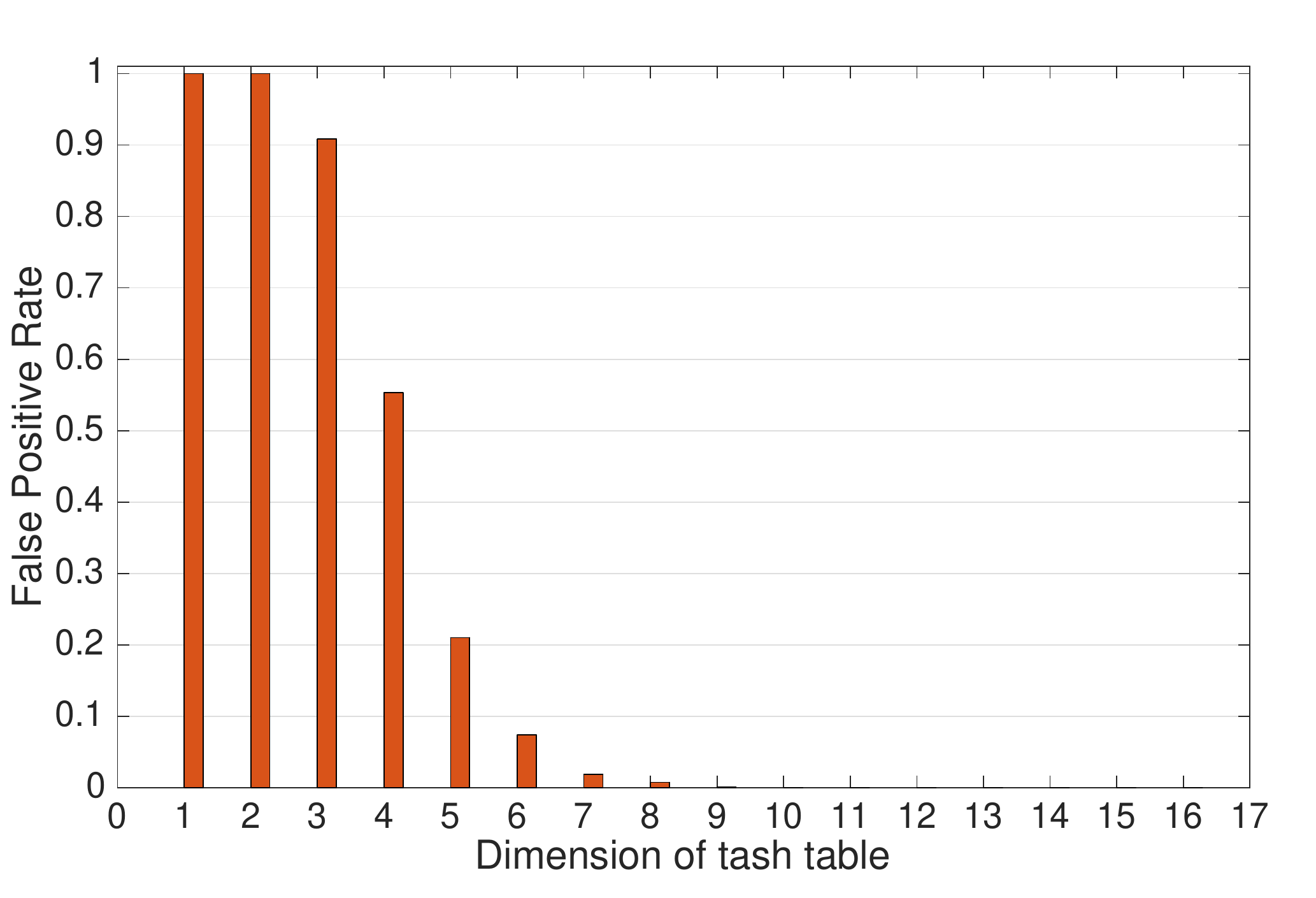}
  }
  \vspace{-0.2cm}
    \caption{Missing detection with $300$ tags. \textnormal{ (a) The resulted FPRs as function of the missing number (b) as function of the dimension of tash tables.}}
       \vspace{-0.3cm}
\end{figure}

\textbf{Testbed based.} Our scheme only uses the first entry of the tash table for the estimation, thereby we only need one entry-inventory. Fig.~\ref{fig:usage-estimation-real} shows the CDF of estimation results across $300$ tags. We define the error rate as $|n-\widehat{n}|/n$ where $\widehat{n}$ is the estimated number.  As a result, $90\%$ of the estimations have an error rate less than $0.1$ and a median of $0.04$ when setting the dimension $l=1$. In this case, almost half tags follow into the first entry so the rate could be pretty high, at the price of longer inventory time. As $l$ increases, the error rate also increases because less samples are acquired for the estimation. These experiments show the feasibility of using tash table for cardinality estimation.

\textbf{Simulation based.} We then perform the evaluation through large-scale simulations for two reasons: ($1$) ensuring its scalability when meeting a huge number of tags. ($2$) making comparisons with prior work, which are all simulation-based. We numerically simulate in Matlab using tash scheme as well as other five prior RFID estimation schemes: UPE\cite{kodialam2006fast}, EZB\cite{kodialam2007anonymous}, FNEB\cite{han2010counting}, MLE\cite{li2010energy}, ART\cite{shahzad2012every}. We implement these schemes by referring to the RFID estimation tool developed by Shahzad\cite{shahzad}. Fig.~\ref{fig:usage-estimation-time} shows the time cost with a varying $n$ given $\alpha=0.9$ and $\beta=0.08$. We observe that our scheme is $5\times$ faster than the others on average when $n<1000$. Thus our scheme is suitable for the estimation with a small number of tags. When $n>1000$, the performance of our scheme starts to vibrate between ART and MLE, due to two reasons. First, our scheme is not collision-free so that more efforts are required to deal with the collisions incurred by more tags. Second,  the size of a tash table can only increase in the power of two, making the size always vibrate around the optimal one.  Even so, the advantage of our scheme is still clear: it is the first RFID estimation scheme that can work in real life. Notice that ART claimed to work with RFID systems because they are theoretically compatible with ALOHA protocols.   Actually, the current COTS RFID systems do not allow user to control the low-level access, like fined-grained adjustment of  frame length and obtaining slot-level feedback, which are necessary to implement ART. Thus, there is no way for ART to implement their algorithms over COTS RFID systems without any hardware  modification and fabrication.

\subsection{Usage II: Missing Detection}

Finally, we evaluate the effectiveness of missing detection in real case. We randomly remove $m$ tags from the testbed. Since we only have $300$ tags in total, we fix the number of random seeds to $2$, \ie $k=2$. The performance is evaluated in term of the false positive rate (FPR), which is the ratio of number of mistakenly detected as missing tags to the total number of really missing tags. Our scheme is able to successfully find out all the missing tags because the residual table always contains the entries that missing tags are tashed into. Fig.~\ref{fig:missing-detection-eval-1} shows the results of the first case in which we use an $8$-bit hash table (\ie $l=8$) to detect the missing tags. Consequently, the FPR is maintained around $0.01$ when $m<14$ (\ie $5\%$ of the tags are missing).  Fig.~\ref{fig:missing-detection-eval-2} shows the second case in which we remove $10$ tags and detect the missing tags by changing the dimension of tash table.  As Theorem.~\ref{theorem:missing-detection} suggests, we should set $l=5,6,7$ to guarantee the FPR $\gamma< 0.2,0.1,0.01$. From the figure, we can find that the results of our experiments completely conform to this theorem. The real FPRs equal $0.21, 0.07$ and $0.008$ in the three cases. Tash enabled missing detection works well in practice.

\section{Conclusion}
\label{section:conclusion}

This work discusses a fundamental issue that how to supplement hash functionality to existing COTS RFID systems, which is dispensable for prior HEPs. A key innovation of this work is our design of hash primitives, which is implemented using selective reading. Tash not only makes a big step forward in boosting prior HEPs, but also opens up a wide range of exciting opportunities.

\section*{Acknowledgments}
The research is  supported by GRF/ECS (NO. 25222917), NSFC  General Program (NO. 61572282) and Hong Kong Polytechnic University (NO. 1-ZVJ3). We thank all the reviewers for their valuable comments and helpful suggestions, and particularly thank Eric Rozner for the shepherd.

\bibliographystyle{ACM-Reference-Format}
\bibliography{acmart.bib} 

\end{document}